\documentclass{jams-l}
\usepackage[left=1in,right=1in,bottom=1in,top=1in]{geometry}
\usepackage{amssymb}
\usepackage{graphicx}
\usepackage{tikz-cd}
\usepackage{undertilde}
\usepackage{enumitem}
\usepackage{mathtools}
\usepackage{hyperref}
\hypersetup{colorlinks=true}
\copyrightinfo{2021}{Nivedita}
\newtheorem{theorem}{Theorem}[section]
\newtheorem{lemma}[theorem]{Lemma}
\theoremstyle{definition}
\newtheorem{definition}[theorem]{Definition}
\theoremstyle{definition}
\newtheorem{assumption}[theorem]{Assumption}
\newtheorem{example}[theorem]{Example}

\theoremstyle{remark}
\newtheorem{remark}[theorem]{Remark}
\numberwithin{equation}{section}

\pdfstringdefDisableCommands{
    \def\centering#1{<#1>}
    \def\underline#1{<#1>}
}

\begin{document}
    \title[Topological classification of crystal defects]{Revisiting the topological classification of defects in crystals}
    \author[Nivedita and A. Gupta]{Nivedita | Anurag Gupta}
    \address{IIT Kanpur, Kanpur, UP-208016, India}
    \email{nivdita@iitk.ac.in, ag@iitk.ac.in}
   
    \subjclass[2020]{Primary 74A99; Secondary 55Q52}
    \date{\today}
    
    \begin{abstract}
        A general theory of topological classification of defects is introduced. We illustrate the application of tools from algebraic topology, including homotopy and cohomology groups, to classify defects including several explicit calculations for crystals in $\mathbb{R}^2$, $S^2$, 2-dimensional cylinder, 2-dimensional annulus, and 2-tori. A set of physically motivated assumptions is formulated in order to justify the classification process and also to expose certain inherent inconsistencies in the considered methodology, particularly for crystal lattices.
    \end{abstract}
    \maketitle
    \tableofcontents
    \section{Introduction}
     Defects and textures in crystalline materials yield strong influence over a wide range of their physical properties, from mechanical strength to electronic and magnetic behaviour. An important concern with defects in crystals is to be able to classify them, and understand their equivalence, depending on the physical space they may inhabit. In particular, the classification results will vary between crystals in Euclidean spaces and those on curved spaces (e.g., spheres, cylinders, tori). With the idea of defects being related to breaking of order and symmetry, it is natural to look at them through topological and group theoretic methods \cite{Mermin79, Kleman89}.  
    

   The Order parameter space is a space of all possible ground states of a system (or vacua). A defect represents a region in the physical space where the order parameter no longer takes values in the order parameter space. On the other hand, a given property is called topological if continuous deformations do not lead to a change in that property. Given a configuration, the order parameter is a function from the real space to the order parameter space. If this map is undefined at some places, and can not be extended continuously to the undefined regions, the defect is called topological. If it is possible to extend the function over a defect continuously then the defect is topologically unstable or topologically trivial. Of central interest to us, therefore, is to deal with continuous deformations of functions between topological spaces (the real space and the order parameter space). This is achieved through homotopy theory, which is the study of mathematical contexts where the functions are equipped with a concept of deformations (morphism of the `second level'), and then iteratively with homotopies of homotopies between those (morphisms of the `third level').

One of the first instances of using homotopy theoretic ideas for kinks (textures) in a field-theoretic setting was by Finkelstein~\cite{Finkelstein66} who also discussed, albeit briefly, the special case of continuous crystals. Rogula~\cite{Rogula} extended this work to demonstrate how one can use homotopy theory to classify localised defects (such as point or line defects in $\mathbb R^3$). However, he took the symmetry group of crystals to be the affine group which, being inconsistent with the physics of crystals, led to erroneous results such as the non-existence of domain walls \cite{Michel78}.
A systemic classification of defects of various dimensionality using homotopy theory was given by Toulouse and Kl\`eman \cite{GK76}, who introduced the idea of surrounding the defects with spheres and classifying the possible defects via the fundamental group (and higher homotopy groups), cf. \cite{Mermin79, Balian1981PhysiqueDD}. The classification methodology was applied to crystals, both Euclidean and curved, by Kl\'eman \cite{Kleman89}, however falling short of explicitly calculating the equivalence classes of defects.
    
The purpose of this work is to revisit the theory of topological classification of defects in the context of crystalline materials. Towards this end, we derive a manifold generalisation of the classification expression used otherwise in the literature. The generalisation allows us to use both homotopy groups (fundamental group and higher) and cohomology groups as computational tools for the classification purposes. The latter is useful, in particular, for crystals in general manifolds (non-Euclidean and non-spherical), as demonstrated explicitly for the cases of 2-dimensional (2D) cylinder, 2D annulus, and 2D torus. The homotopy methods, on the other hand, are useful for crystals in Euclidean and spherical spaces, as illustrated in the accompanying calculations for various lattice structures in $\mathbb{R}^2$ and $S^2$. Most of our results, obtained using homotopy and cohomology groups, have not appeared in the earlier literature, to the best of our knowledge.

In the second part of this work we formulate a set of physically meaningful assumptions based on which the classification theory can be justified. The precise assumptions are given in terms of the topology of the system's configuration space. Having these well-stated assumptions in hand, we are led to highlight a few inconsistencies that may limit the validity and applicability of the classification results for crystalline materials. We also propose some directions in order to partially rectify the inconsistencies. More specifically, we give a definition of disinclination for general manifolds and introduce the idea of using a sheaf theoretic formalism for defects.

We provide a brief outline of the article. In Section~\ref{sec:introchaptopdef}, we derive a general mathematical expression for defect classification, following with a discussion on homotopy and cohomology groups in Section~\ref{sec:homcoho}. Beginning with Section~\ref{sec:crystals} we restrict our attention to crystals. The topological classification of defects for crystals in $\mathbb{R}^n$ and in $S^n$, with several detailed examples, is provided in Sections~\ref{sec:crystalsE} and \ref{sec:crystalsS}, respectively. In Section~\ref{sec:crystalsC}, we use cohomology groups to classify defects on crystals on more general manifolds. A set of well motivated assumptions, which justify the classification process of the preceding sections, is given in Section~\ref{sec:phyjust}. Building upon the assumptions, in Section~\ref{sec:critique}, we discuss several inconsistencies that arise naturally in the classification process. In Section~\ref{reconcile}, we close the article by suggesting alternatives in order to rectify some of the inconsistencies. 
    
\section{\centering Formulation} 
\label{sec:introchaptopdef}
A physical theory is defined by specifying a physical space $M$, with some kind of geometric structure (e.g., metric tensor or the structure of a principal bundle), a configuration space $\mathcal C_M$ of allowable physical configurations on $M$, and a smooth Hamiltonian map $\mathcal H_M\colon\mathcal C_M\to\mathbb R$ which assigns energy to each configuration.
\begin{definition}
    Let $V_M\subseteq\mathcal C_M$ be the set of local minima of $\mathcal H_M$ and $G_M\subseteq\mathrm{Aut}\left(\mathcal C_M\right)$ be the (geometric) automorphisms of $\mathcal C_M$ that preserve $\mathcal H_M$, i.e.,
    \begin{equation}
        \begin{split}
            V_M=&\left\{\phi\in C_M\mid\phi\text{ is a minimum of }\mathcal H_M\right\}\\
            G_M=&\left\{g\in\mathrm{Aut}\left(\mathcal C_M\right)\mid\mathcal H_M\circ g=\mathcal H_M\right\}.
        \end{split}
    \end{equation}
In physical terms we can associate $G_M$ with the symmetry group of the system and $V_M$ with the space of ground-states or vacua.
\end{definition}

\begin{remark}
    \label{rem:notation}
    Hereafter we will not write the subscript $M$ explicitly whenever the dependence is clear from the context.
\end{remark}
In a physical theory with degenerate vacua, where $V$ is not a singleton, a defect, defective on a subset $X$ of the space $M$, is an assignment of a vacuum to each point of $M\setminus X$ which can not be extended to a global function from $M$ even up to a homotopy. The topological classification of defects provides all the equivalence classes of possible defects structures, defective on $X$, up to a homotopy. There is no continuous path (deformation) between two non-equivalent defect structures. Topological defects are said to be stable if they are not homotopic to the constant function which assigns the same vacuum to every point. Generalising \cite{Mermin79}, we introduce
\begin{definition}
    \label{def:def}
    For $X\subseteq M$, we define the set of topological defects, defective on $X$, as
    \begin{equation}
        \mathrm{Def}_M\left(X\right)=\mathrm{hTop}\left[M\setminus X,V\right].
    \end{equation}
\end{definition}
\noindent We will physically motivate this definition in Section~\ref{sec:phyjust}.

\begin{remark}
    We look at homotopy classes of continuous functions instead of smooth functions because smooth functions are dense in continuous functions between manifolds. Moreover, if two smooth functions are continuously homotopic then they are smoothly homotopic~\cite[Theorem 8]{Pontryagin55}. Therefore, the homotopy type calculated using either smooth or continuous functions is the same~\cite[Propostion 17.8, Corollary 17.8.1]{BT82}. 
\end{remark}
\begin{remark}
    \label{rem:bdry}
    If $M$ is not compact, or if $M$ has a boundary, we may need boundary conditions to ensure that the Hamiltonian is well-defined or behaves well under (functional) variations. Whenever the boundary conditions are analytic, we can consider the subspace of smooth functions satisfying the boundary conditions. On the other hand, if we need functions that are constant at infinity then we can replace $M$ by its $1$-point compactification in Definition~\ref{def:def}. The $1$-point compactification adds a point at infinity. That a function is constant at infinity is then same as assigning that constant to the point at infinity.
\end{remark}
To make use of Definition~\ref{def:def}, we need to understand the homotopy-theoretic structure of $V$. The action of $G$ on any element of $V$ maps it to another element of $V$ as $G$ preserves $\mathcal H$ and hence its minima. This gives us an action of $G$ on $V$. We assume that, for $v\in V$, the natural map $G/G_v\to Gv$ is a homeomorphism, where $G_v$ is the stabiliser of $v$ and $Gv$ is its orbit. The sufficient conditions for this map to be a homeomorphism are given in Theorems~\ref{theo:properorbstab} and \ref{theo:lchcorbstab}. We further assume that $V$ is the (topological) coproduct of the orbits. Together, these two assumptions imply
\begin{equation}
    \label{eq:decompminima}
    V\cong\coprod_{v\in V_a}G/G_v,
\end{equation}
where $V_a$ is a choice of configuration from each orbit. Physically, it is a choice of ground-state from each nonequivalent type of minimum.

In the following theorem we obtain a general mathematical expression for $\mathrm{Def}_M\left(X\right)$, as introduced in Definition~\ref{def:def}, taking into account the simplifications discussed in the previous paragraph. 
\begin{theorem}[Classification]
    \label{theo:defclass}
    If $V$ satisfies Equation~\ref{eq:decompminima}, and both $M\setminus X$ and $V$ are locally path-connected, then
    \begin{equation}
    \label{eq:expandedexpression}
        \mathrm{Def}_M\left(X\right)\cong\prod_{A\in\pi_0\left(M\setminus X\right)}\coprod_{v\in V_a}\mathrm{hTop}\left[A,G_0/\left(G_0\cap G_v\right)\right]\times\pi_0\left(G\right)/p\left(G_v\right),
    \end{equation}
    where $p\colon G\to\pi_0\left(G\right)$ is the quotient map.
\end{theorem}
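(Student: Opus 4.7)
The plan is to derive the expression via three successive decompositions corresponding to the three layers in Equation~\ref{eq:expandedexpression}: the product over path components of $M\setminus X$, the coproduct over orbit representatives in $V_a$, and the factor $\pi_0(G)/p(G_v)$.

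First, since $M\setminus X$ is locally path-connected, its path components are open and $M\setminus X$ is homeomorphic to the topological coproduct $\coprod_{A\in\pi_0(M\setminus X)} A$. As $\mathrm{hTop}[-,V]$ converts coproducts to products, this immediately gives
\[
    \mathrm{Def}_M(X) \cong \prod_{A\in\pi_0(M\setminus X)} \mathrm{hTop}[A,V].
\]
Next, I would invoke Equation~\ref{eq:decompminima}, $V\cong\coprod_{v\in V_a} G/G_v$, which makes each orbit clopen in $V$. Any continuous $f\colon A\to V$ out of the path-connected set $A$ must land in a single orbit, so
\[
    \mathrm{hTop}[A,V] \cong \coprod_{v\in V_a}\mathrm{hTop}[A,\,G/G_v].
\]

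The third step, which is the main obstacle, is to analyse each orbit $G/G_v$ at the level of path components. Under the hypothesis that the orbit-stabiliser map is a homeomorphism, the quotient $G\to G/G_v$ is (under the sufficient conditions of Theorems~\ref{theo:properorbstab} and \ref{theo:lchcorbstab}) a fibre bundle with fibre $G_v$. The tail of the associated long exact sequence of homotopy,
\[
    \pi_0(G_v)\to\pi_0(G)\to\pi_0(G/G_v)\to 0,
\]
identifies $\pi_0(G/G_v)$ with $\pi_0(G)/p(G_v)$. The component of the identity coset is the $G_0$-orbit $G_0v\cong G_0/(G_0\cap G_v)$, and left translation by representatives of the distinct classes in $\pi_0(G)/p(G_v)$ carries this homeomorphically onto every other component. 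Since $V$, hence each $G/G_v$, is locally path-connected, these components are open. Consequently, for path-connected $A$, any map $A\to G/G_v$ factors through a unique component, yielding
\[
    \mathrm{hTop}[A,\,G/G_v] \cong \mathrm{hTop}\!\left[A,\,G_0/(G_0\cap G_v)\right]\times \pi_0(G)/p(G_v).
\]

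Assembling the three reductions produces exactly Equation~\ref{eq:expandedexpression}. The substantive content sits entirely in the third step: identifying $\pi_0(G/G_v)$ via the fibration and exhibiting each path component as $G_0/(G_0\cap G_v)$ by group translation. The surrounding steps amount to routine bookkeeping with clopen components and the local path-connectedness hypotheses.
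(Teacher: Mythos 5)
Your proposal is correct and follows essentially the same route as the paper: decompose $\mathrm{Def}_M(X)$ over the path components of $M\setminus X$ and the orbits of $V$ (the paper does both at once via Theorem~\ref{theo:breakintopathcomp} and Theorem~\ref{theo:pi0prescoprod}), then identify each path component of $G/G_v$ with $G_0/\left(G_0\cap G_v\right)$ and the set of components with $\pi_0\left(G\right)/p\left(G_v\right)$ (the paper cites Theorems~\ref{theo:connhomhomeo}, \ref{theo:idquot} and \ref{theo:connquot} for exactly these facts). The only divergence is that you re-derive those supporting lemmas inline, using the tail of the long exact sequence of the bundle $G\to G/G_v$ and left translation; note only that Theorems~\ref{theo:properorbstab} and \ref{theo:lchcorbstab} give the orbit--stabiliser homeomorphism rather than local triviality of $G\to G/G_v$, but the conclusion you extract is precisely the paper's Theorem~\ref{theo:connquot}, so nothing is lost.
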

\noindent According to this result a defect, defective on $X\subseteq M$, is an element of $\mathrm{Def}_M\left(X\right)$ given by
\begin{enumerate}
    \item A choice of an element of $V_a$ for every connected component of the non-defective region. In crystals this is a choice of the crystalline structure in each domain.
        \item An element of $\pi_0\left(G\right)/p\left(G_v\right)$. Physically, nontrivial elements $\pi_0\left(G\right)$ are the symmetries which cannot be deformed to identity (like reflections of crystals) and $p\left(G_v\right)$ are symmetries that preserve the vacuum $v$. For crystals, this is a choice of chirality in different domains. This generalisation allows for inclusion of twin-boundary like defects.
        \item An element of $\mathrm{hTop}\left[A,G_0/\left(G_0\cap G_v\right)\right]$. This is a choice of assigning each point in the domain to a vacuum symmetric to the chosen vacuum type $v$. In crystals, this corresponds to an assignment of a rotated and translated lattice (and thus unit cell) to each point in the domain.
\end{enumerate}
\begin{proof}[Proof of Theorem~\ref{theo:defclass}]
    Applying Theorem~\ref{theo:breakintopathcomp} to Definition~\ref{def:def} we obtain
    \begin{equation}
        \mathrm{Def}_M\left(X\right)\cong\prod_{A\in\pi_0\left(M\setminus X\right)}\coprod_{B\in\pi_0\left(V\right)}\mathrm{hTop}\left[A,B\right].
    \end{equation}
    On the other hand, application $\pi_0$ to Equation~\ref{eq:decompminima}, and using Theorem~\ref{theo:pi0prescoprod}, yields
    \begin{equation}
        \pi_0\left(V\right)\cong\coprod_{v\in V_a}\pi_0\left(G/G_v\right).
    \end{equation}
    Putting these together, we obtain
    \begin{equation}
        \mathrm{Def}_M\left(X\right)\cong\prod_{A\in\pi_0\left(M\setminus X\right)}\coprod_{v\in V_a}\coprod_{B\in\pi_0\left(G/G_v\right)}\mathrm{hTop}\left[A,B\right].
    \end{equation}
    Note that, since $G/G_v$ is a homogeneous $G$-space, all the connected components are homeomorphic (Theorem~\ref{theo:connhomhomeo}). Any $B\in\pi_0\left(G/G_v\right)$ is therefore homeomorphic to $\left(G/G_v\right)_0$ (the component containing the coset $G_v$). Consequently, we can write
    \begin{equation}
        \mathrm{Def}_M\left(X\right)\cong\prod_{A\in\pi_0\left(M\setminus X\right)}\coprod_{v\in V_a}\mathrm{hTop}\left[A,\left(G/G_v\right)_0\right]\times\pi_0\left(G/G_v\right).
    \end{equation}
    The desired result follows on simplifying this expression using Theorems~\ref{theo:idquot} and \ref{theo:connquot}.
\end{proof}
\begin{remark}
  Nondefective configurations are represented by those defects in which each domain is given the same choice of vacuum, and further, each domain is given a constant function, the constant being the same for all domains. The element of $\pi_0\left(G\right)/p\left(G_v\right)$ is also same in each domain.
\end{remark}
\noindent In the next section, we discuss two standard techniques from algebraic topology which will allow us to calculate the homotopy classes of maps between two given spaces.

\section{\centering Relation to Homotopy Groups and Cohomology}
\label{sec:homcoho}
For the classification we need to calculate $\mathrm{hTop}\left[A,G_0/\left(G_0\cap G_v\right)\right]$. Depending on the topological nature of $A$ and $G_0/\left(G_0\cap G_v\right)$, we may in some special cases be able to calculate $\mathrm{hTop}\left[A,G_0/\left(G_0\cap G_v\right)\right]$ either in terms of homotopy groups of $G_0/\left(G_0\cap G_v\right)$ or in terms of cohomology groups of $A$. For instance, if the domain can be expressed as a pointed sums of spheres then we use homotopy groups. However, when the codomain is expressible as products of $S^1$ (more generally, products of spaces that are terms of some spectra) we use cohomology groups. 
\subsection{\centering Homotopy groups}
Consider the following situations:
\begin{itemize}
    \item $M=\mathbb R^n$ and $X$ is a union of $k_m$ linear subspaces of dimension $m$ which do not intersect (for $m=0,\dots,n-1$). Then $M\setminus X$ is homotopy equivalent to
    \begin{equation}
        \coprod_{i=0}^{k_{n-1}}\bigvee_{j=0}^{n-2}\left(S^{n-j-1}\right)^{\vee k_{ij}},
    \end{equation}
    where $k_{ij}$ is the number of $j$-dimensional subspaces lying between the $n-1$-dimensional subspaces numbered $i$ and $i+1$ (the $0^\mathrm{th}$ and the $\left(k_{n-1}+1\right)^\mathrm{th}$ subspace are both at infinity). This is shown in Figure~\ref{fig:rnretract} for $n=3$.
    \item $M=\mathbb R^3$ and $X$ is a circle. Then $M\setminus X$ is homotopy equivalent to $S^2\vee S^1$, see  Figure~\ref{fig:rnminuscircle}.
    \item $M=S^n$ and $X$ is a set of $k$-points ($k>0$). Then $M\setminus X$ is homotopy equivalent to $\left(S^{n-1}\right)^{\vee k-1}$, see Figure~\ref{fig:sphereretract} for $n=2$.
\end{itemize}
\begin{figure}
    \centering
    \includegraphics[scale=0.7]{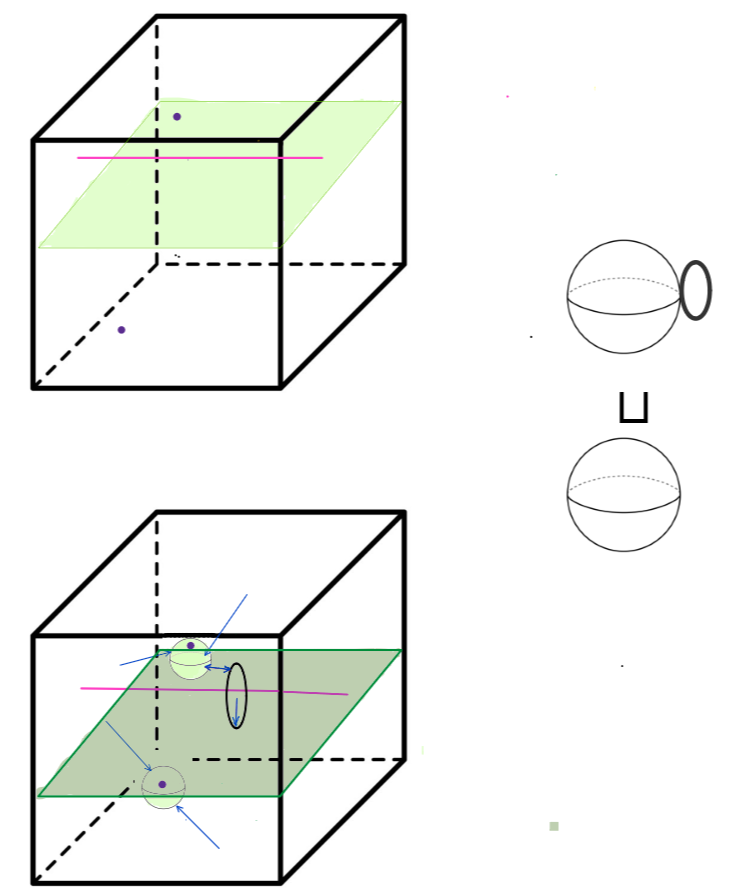}
    \caption{$\mathbb R^3$ minus a plane, a line, and two points. The part of $\mathbb{R}^3$ below the plane retracts to $S^2$ around the singular point. Similarly, above the plane, the retraction (indicated by the arrows) gives us a wedge between $S^1$ and $S^2$.}
    \label{fig:rnretract}
\end{figure}

\begin{figure}
    \centering
    \includegraphics[scale=0.6]{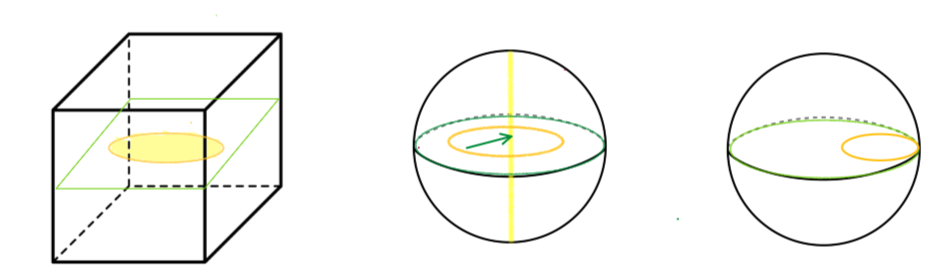}
    \caption{$\mathbb R^3$ minus a circle. The shaded yellow disc contracts around a line segment, while the rest of the space retracts to form a sphere $S^2$. We can slide one end of the diameter along the surface of the sphere to get the final wedge of $S^2$ and $S^1$.}
    \label{fig:rnminuscircle}
\end{figure}

\begin{figure}
    \centering
    \includegraphics[scale=0.4]{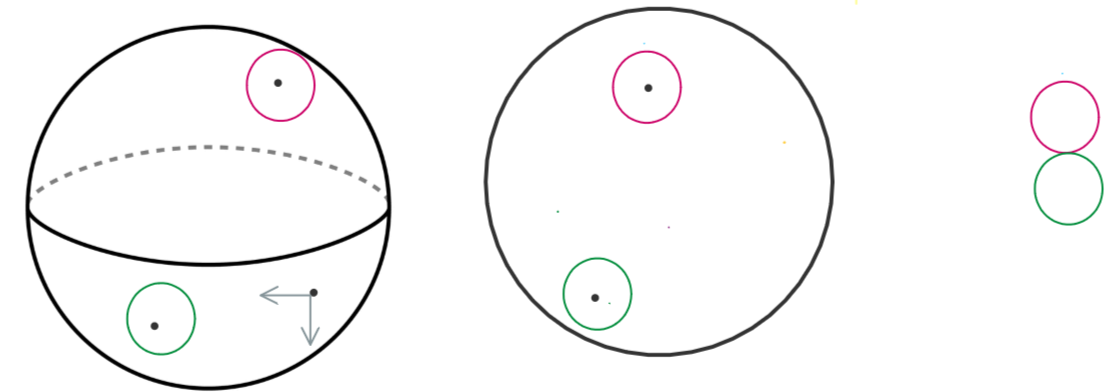}
    \caption{$S^2$ minus points. On removing the first point the sphere becomes homotopic a disc. Further removal of points yields a wedge of circles.}
    \label{fig:sphereretract}
\end{figure}
In several cases, such as the ones exhibited above, the connected components of $M\setminus X$, i.e., $A\in\pi_0\left(M\setminus X\right)$, will often be homotopy equivalent to $\bigvee_{i=1}^kS^{n_i}$. Then
\begin{equation}
    \label{eq:htpygroupsarise}
    \mathrm{hTop}\left[A,G_0/\left(G_0\cap G_v\right)\right]\cong\frac{\mathrm{hTop}_*\left[A,G_0/\left(G_0\cap G_v\right)\right]}{\pi_1\left(G_0/\left(G_0\cap G_v\right)\right)}\cong\prod_{i=1}^k\frac{\pi_{n_i}\left(G_0/\left(G_0\cap G_v\right)\right)}{\pi_1\left(G_0/\left(G_0\cap G_v\right)\right)},
\end{equation}
where the first step follows from Theorem~\ref{theo:relbtwbasdunbshtpy} and the second from Theorem~\ref{theo:wedgeiscoprod} and Theorem~\ref{theo;pinquotfreehom}. The quotient is by the standard action of the fundamental group on pointed mapping spaces. The following lemma collects the conditions under which the class of defects have a natural group structure.
\begin{lemma}
    If 
    \begin{itemize}
        \item $V_a$ is a singleton (let $v$ denote the unique element),
        \item $M$ and $X\subseteq M$ are such that each $A\in\pi_0\left(M\setminus X\right)$ is homotopy equivalent to a wedge of spheres, i.e.,
        \begin{equation*}
            A\simeq\bigvee_{i=1}^{k\left(A\right)}S^{n_i\left(A\right)},
        \end{equation*}
        \item For every $n_i\left(A\right)$, the action of $\pi_1\left(G_0/\left(G_0\cap G_v\right)\right)$ on $\pi_{n_i\left(A\right)}\left(G_0/\left(G_0\cap G_v\right)\right)$ is trivial, and
        \item$p\left(G_v\right)\trianglelefteq\pi_0\left(G\right)$,
    \end{itemize}
    then $\mathrm{Def}_M\left(X\right)$ has a natural group structure.
\end{lemma}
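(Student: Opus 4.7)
The plan is to begin from the classification formula of Theorem~\ref{theo:defclass}, apply the four bulleted hypotheses one by one to simplify each factor, and then observe that every surviving factor is already a group. Since a product of groups inherits a canonical, componentwise group structure, this furnishes the natural group structure on $\mathrm{Def}_M\left(X\right)$ with essentially no further work.

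First, by hypothesis (i) the indexing set $V_a$ collapses to the single element $v$, so the coproduct in Theorem~\ref{theo:defclass} reduces to a single term and I am left with
\begin{equation*}
    \mathrm{Def}_M\left(X\right) \cong \prod_{A \in \pi_0\left(M \setminus X\right)} \mathrm{hTop}\left[A, G_0/\left(G_0 \cap G_v\right)\right] \times \pi_0\left(G\right)/p\left(G_v\right).
\end{equation*}
By hypothesis (ii), each $A$ is homotopy equivalent to a wedge of spheres, so Equation~\ref{eq:htpygroupsarise} applies and gives
\begin{equation*}
    \mathrm{hTop}\left[A, G_0/\left(G_0 \cap G_v\right)\right] \cong \prod_{i=1}^{k\left(A\right)} \pi_{n_i\left(A\right)}\left(G_0/\left(G_0 \cap G_v\right)\right)/\pi_1\left(G_0/\left(G_0 \cap G_v\right)\right),
\end{equation*}
the quotient referring to the standard action of the fundamental group on based homotopy classes.

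Hypothesis (iii) then says that this action is trivial for every $n_i\left(A\right)$ that occurs, so each orbit is a singleton and the quotient coincides set-theoretically with $\pi_{n_i\left(A\right)}\left(G_0/\left(G_0 \cap G_v\right)\right)$. For $n_i \ge 2$ this is an abelian group; for $n_i = 1$ triviality of the self-action forces $\pi_1$ to be abelian, so once again the orbit space is naturally a group. The case $n_i = 0$ contributes nothing, since $G_0/\left(G_0 \cap G_v\right)$ is connected as a homogeneous quotient of a connected topological group. Finally, hypothesis (iv) promotes $\pi_0\left(G\right)/p\left(G_v\right)$ to a quotient group.

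Stringing the four reductions together exhibits $\mathrm{Def}_M\left(X\right)$ as an (indexed) product of groups, and hence as a group in the canonical componentwise way. The subtlest step to verify carefully is the third one: the set-level identification of $\pi_{n_i}/\pi_1$ with $\pi_{n_i}$ must respect the group operation, which holds here precisely because triviality of the $\pi_1$-action makes the passage to orbits compatible with the native multiplication on $\pi_{n_i}$. Once this is checked, none of the identifications depend on arbitrary choices, and so the resulting group structure is natural.
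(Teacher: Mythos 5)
Your proposal is correct and follows essentially the same route as the paper: apply Theorem~\ref{theo:defclass} with $V_a$ a singleton, use Equation~\ref{eq:htpygroupsarise} together with the trivial $\pi_1$-action to replace each orbit set $\pi_{n_i}/\pi_1$ by the group $\pi_{n_i}$, and invoke normality of $p\left(G_v\right)$ to make the remaining factor a group. Your added remarks on the $n_i=1$ case (triviality of the conjugation action forcing $\pi_1$ abelian) and the $n_i=0$ case are correct refinements that the paper's terser proof leaves implicit.
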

\noindent In case of crystals, this group structure is related to the interaction and entanglement of line defects \cite[Chapter 9]{Sethna21}.
\begin{proof}
    Under the present considerations, Equation~\ref{eq:htpygroupsarise} reduces the classification to
    \begin{equation}
        \begin{split}
            \mathrm{Def}_M\left(X\right)&\cong\prod_{A\in\pi_0\left(M\setminus X\right)}\left(\left(\prod_{i=1}^{k\left(A\right)}\pi_{n_i\left(A\right)}\left(G_0/\left(G_0\cap G_v\right)\right)\right)\times\pi_0\left(G\right)/p\left(G_v\right)\right)\\
            &\cong\pi_0\left(G\right)/p\left(G_v\right)^{\pi_0\left(M\setminus X\right)}\times\prod_{A\in\pi_0\left(M\setminus X\right)}\prod_{i=1}^{k\left(A\right)}\pi_{n_i\left(A\right)}\left(G_0/\left(G_0\cap G_v\right)\right).
        \end{split}
    \end{equation}
    Since $p\left(G_v\right)\trianglelefteq\pi_0\left(G\right)$, this is a product of groups.
\end{proof}
 
\begin{example}
    \label{ex:wedgeofS1pi}
    Let $X\subseteq M$ be such that $M\setminus X\simeq\left(S^1\right)^{\vee m}$. Then Equation~\ref{eq:htpygroupsarise} gives us
    \begin{equation}
        \label{eq:merminpi1conj}
        \mathrm{Def}_M\left(X\right)\cong\coprod_{v\in V_a}\left(\text{conjugacy classes of }\pi_1\left(G_0/\left(G_0\cap G_v\right)\right)\right)^m\times\pi_0\left(G\right)/p\left(G_v\right),
    \end{equation}
    where we have used the fact that $\pi_1$ acts on itself by conjugation (Theorem~\ref{theo:pi1selfocnj}).  In \cite{Mermin79,Kleman77}, line defects in $\mathbb R^3$ are classified by conjugacy classes of $\pi_1\left(G/G_v\right)$. Since $\mathbb R^3\setminus\text{line}\simeq S^1$, we see that Equation~\ref{eq:merminpi1conj} applies with $m=1$. If the fundamental group happens to be abelian then we have a natural group structure.  Our result contains additional terms as we have considered the case where $G$ is not necessarily connected and have allowed for the possibility of nonequivalent minima. It reduces to the formula in \cite{Mermin79} when these generalisations are removed.

\end{example}
\subsection{\centering Cohomology}
Let $E_\alpha$ be a spectra representing cohomology theories $H_\alpha$ ($\alpha=$ singular, K-theory, stable cohomotopy etc) \cite{nlab:generalized_(eilenberg-steenrod)_cohomology}. If $G_0/\left(G_0\cap G_v\right)\simeq\prod_{i\in I}E_{\alpha_i}\left(n_i\right)$ (where $E\left(n\right)$ is the $n$\textsuperscript{th} space in the spectrum $E$) then
\begin{equation}
    \mathrm{hTop}\left[A,G_0/\left(G_0\cap G_v\right)\right]\cong\mathrm{hTop}\left[A,\prod_{i\in I}E_{\alpha_i}\left(n_i\right)\right]\cong\prod_{i\in I}\mathrm{hTop}\left[A,E_{\alpha_i}\left(n_i\right)\right]\cong\prod_{i\in I}H_{\alpha_i}^{n_i}\left(A\right).
\end{equation}
Thus, in this case, calculating $\mathrm{Def}_M\left(X\right)$ reduces to calculating the generalised cohomology groups of the connected components of $M\setminus X$ and some discrete calculations involving $V_a$ and $\pi_0\left(G\right)$.
\begin{example}
    \label{ex:cohomology}
    If $G_0/\left(G_0\cap G_v\right)\simeq\mathbb T^n$, we obtain
    \begin{equation}
        \mathrm{hTop}\left[A,G_0/\left(G_0\cap G_v\right)\right]\cong H_{\mathrm{sing},\mathbb Z}^1\left(A\right)^n,
    \end{equation}
    as $\mathbb T^n\cong\left(S^1\right)^n$ and $S^1\cong K\left(\mathbb Z,1\right)=E_{\mathrm{sing},\mathbb Z}\left(1\right)$.
   Here, we have used the Eilenberg-Maclance spaces $K\left(G,n\right)$ which are pointed spaces defined by
    \begin{equation}
        \pi_k\left(K\left(G,n\right)\right)\cong
        \begin{cases}
            G\text{ if }n=k,\\
            0\text{ otherwise.}
        \end{cases}
    \end{equation} 
    Clearly, to define $K\left(G,0\right)$ we require $G$ to be a set, to define $K\left(G,1\right)$ we require $G$ to be a group, and to define $K\left(G,n\right)$ we require $G$ to be an abelian group. By definition, this space is unique up to weak homotopy equivalence.  
    If $G$ is an abelian group, then the spaces $K\left(G,n\right)$ form a spectrum $HG$ \cite{nlab:eilenberg-mac_lane_spectrum}. 
    Previously, twisted cohomology has been used in the classification of global textures in nematics~\cite{doi:10.1098/rspa.2016.0265}. 
\end{example}

\section{\centering Applications to Crystals}
    \label{sec:crystals}
    The configuration space of crystals is typically that of particles (finite or infinite) living on a Riemannian manifold with the Hamiltonian depending only on distances between particles. Thus we have \cite{Kleman89}
    \begin{equation}
        G_M=\mathrm{Isom}\left(M\right).
    \end{equation} 
    On quotienting $V$ by isometries, we find that $V_a$ corresponds to various possible crystal structures that the material can exist in. When we study the defects possible in a particular crystal structure, $V_a$ will considered to be a singleton. Then $G_v$ is the collection of isometries that preserve the lattice $v$ and hence can be identified with the symmetry group of the crystal structure that $v$ belongs to. It acts on $M$ and has a fundamental domain as the unit cell of the crystal, which we assume to be compact and of the same dimension as $M$. This forces $G_v\leq\mathrm{Isom}\left(M\right)$ to be a discrete cocompact subgroup. Indeed, the unit cell is diffeomorphic to $M/G_v$ and consequently $\dim M=\dim M/G_v$. This implies $\dim G_v=0$. We will classify defects for discrete cocompact groups of the isometry group, corresponding to different crystal structures.
   
    Let $q_M\colon\widetilde{\mathrm{Isom}\left(M\right)_0}\to\mathrm{Isom}\left(M\right)_0$ be a universal cover (which always exists for any connected manifold). In the case of crystals, $G_v$ is discrete with vanishing positive homotopy groups. Under such a circumstance Theorem~\ref{theo:lesfunda} tells us that
    \begin{equation}
        \label{eq:discretepihigh0}
        \begin{split}
            \pi_n\left(\mathrm{Isom}\left(M\right)_0/\left(\mathrm{Isom}\left(M\right)_0\cap G_v\right)\right)&\cong\pi_n\left(\mathrm{Isom}\left(M\right)_0\right)\text{ for }n>1\\
            &\cong\pi_n\left(\widetilde{\mathrm{Isom}\left(M\right)_0}\right),
        \end{split}
    \end{equation}
    where in the second line we have used the fact that $\pi_n$ of a space and its universal cover are equal for $n>1$. To calculate $\pi_1$, we use Theorem~\ref{theo:univcovpi1} to obtain
    \begin{equation}
        \label{eq:pi1univcovsub}
        \pi_1\left(\mathrm{Isom}\left(M\right)_0/\left(\mathrm{Isom}\left(M\right)_0\cap G_v\right)\right)\cong q^{-1}\left(\mathrm{Isom}\left(M\right)_0\cap G_v\right)\leq\widetilde{\mathrm{Isom}\left(M\right)_0}.
    \end{equation}
    The action of $\pi_1$ on $\pi_n$ is given by restricting the obvious action of $\widetilde{\mathrm{Isom}\left(M\right)_0}$ on $\pi_n\left(\widetilde{\mathrm{Isom}\left(M\right)_0}\right)$.

\section{\centering Crystals in Euclidean spaces}
\label{sec:crystalsE}
For crystals in $\mathbb R^n$ we take, in accordance with Section~\ref{sec:crystals},
\begin{equation}
    G=\mathrm{Isom}\left(\mathbb R^n\right)\cong\mathbb R^n\rtimes\mathrm O\left(n\right),
\end{equation}
where $\mathrm O\left(n\right)$ acts on $\mathbb R^n$ in the natural way. So we have $\pi_0\left(G\right)\cong\mathbb Z/2\mathbb Z$ and $G_0\cong\mathbb R^n\rtimes\mathrm{SO}\left(n\right)$. Firstly, this tells us that $\pi_0\left(G\right)/p\left(G_v\right)$ is either trivial or $\mathbb Z/2\mathbb Z$ depending on whether $G_v$ contains a reflection or not. Secondly, as discussed in Section~\ref{sec:crystals}, we have to look for discrete cocompact subgroups $G_0\cap G_v$ of $\mathbb R^n\rtimes\mathrm{SO}\left(n\right)$. By Bieberbach's theorems~\cite[Theorem 2.1]{Szczepanski12}, we know that $\mathbb A=_\text{def}G_v\cap\left(\mathbb R^n\rtimes\left\{I_n\right\}\right)$ is isomorphic to $\mathbb Z^n$ (this isomorphism is basically a choice of basis for the lattice). Therefore $G_0\cap G_v\cong\mathbb A\rtimes C$, where $C\subseteq\mathrm{SO}\left(n\right)$ are those rotations that preserve the lattice $\mathbb A$. The subspace $C$ will be finite, for an infinite subspace of a compact space $\mathrm{SO}\left(n\right)$ cannot be discrete, and will hence admit a finite set of generators $c_1,\dots,c_x$. Choosing a basis for $\mathbb A$ we can write
\begin{equation}
    A\rtimes C\cong\mathbb Z^n\rtimes_{M_1,\dots,M_x}C,
\end{equation}
where the $M_i$ are the coordinate representations for $c_i$.
\subsection{\centering Geometry of order parameter space}
\label{subsec:geometryRn}
We now look into the nature of the space $\mathbb R^n\rtimes\mathrm{SO}\left(n\right)/\mathbb A\rtimes C$. By the definition of coset space, we identify
    \begin{equation}
        \left(v,R\right)\sim\left(b+gv,gR\right)\text{ for all } R\in\mathrm{SO}\left(n\right), v\in \mathbb R^n, b\in\mathbb A\text{ and }g\in C.
    \end{equation}
    Choosing $g=I_2$, we observe that $\left(v,R\right)\sim\left(b+v,R\right)$. This collapses the quotient to $\left(\mathbb T^n\rtimes\mathrm{SO}\left(n\right)\right)/C$.
    
    \begin{lemma}
        $\left(\mathbb T^n\rtimes\mathrm{SO}\left(n\right)\right)/C$ is a fibre bundle  over $\mathrm{SO}\left(n\right)/C$ with fibre $\mathbb T^n$.
    \end{lemma}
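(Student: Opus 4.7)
The plan is to exhibit the space as an associated fibre bundle to a principal $C$-bundle over $\mathrm{SO}(n)/C$. First, I would complete the reduction begun in the excerpt: after quotienting by $\mathbb{A}$, the remaining $C$-identification on $\mathbb{T}^n \times \mathrm{SO}(n)$ becomes the diagonal action $g \cdot ([v], R) = ([gv], gR)$. This is well-defined because elements of $C$ preserve $\mathbb{A}$ by definition and hence descend to an action on $\mathbb{T}^n = \mathbb{R}^n/\mathbb{A}$; the symbol ``$\rtimes$'' in the lemma statement is purely formal, as $\mathrm{SO}(n)$ itself need not act on $\mathbb{T}^n$.

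Next, I would establish that the quotient map $\pi : \mathrm{SO}(n) \to \mathrm{SO}(n)/C$ is a principal $C$-bundle. Since $C$ is a finite (hence closed, discrete) subgroup of the compact Lie group $\mathrm{SO}(n)$, the translation action of $C$ on $\mathrm{SO}(n)$ is free and properly discontinuous, so $\pi$ is a covering map. In particular it admits local sections $s_\alpha : U_\alpha \to \mathrm{SO}(n)$ over an open cover $\{U_\alpha\}$ of the base, with the property that every $R \in \pi^{-1}(U_\alpha)$ is uniquely expressible as $g \cdot s_\alpha(\pi(R))$ for some $g \in C$.

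The final step is to recognise $(\mathbb{T}^n \times \mathrm{SO}(n))/C$ as the associated fibre bundle $\mathrm{SO}(n) \times_C \mathbb{T}^n$. The map $p([v], R) = [R]$ is a well-defined continuous projection onto $\mathrm{SO}(n)/C$, and local trivialisations $\phi_\alpha : U_\alpha \times \mathbb{T}^n \to p^{-1}(U_\alpha)$ are produced from the sections by $\phi_\alpha(u, [v]) = [([v], s_\alpha(u))]$, with inverse $[([v], R)] \mapsto (\pi(R), [g^{-1} v])$ where $g$ is the unique element of $C$ above. The fibres of $p$ are manifestly copies of $\mathbb{T}^n$, completing the fibre bundle structure.

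The only nontrivial bookkeeping is verifying the continuity of the trivialisations and their inverses, which reduces to showing that $g$ varies continuously with $R$ on each $\pi^{-1}(U_\alpha)$, and reconciling the side on which $C$ acts in the paper's coset convention; both are standard consequences of principal bundle theory for discrete structure groups and present no genuine obstacle.
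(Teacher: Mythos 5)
Your proposal is correct and follows essentially the same route as the paper: the paper also views $\left(\mathbb T^n\rtimes\mathrm{SO}\left(n\right)\right)/C$ as the quotient of the trivial $\mathbb T^n$-bundle $\mathrm{pr}_1\colon\mathbb T^n\rtimes\mathrm{SO}\left(n\right)\to\mathrm{SO}\left(n\right)$ by the equivariant diagonal $C$-action, concluding that the quotient is a bundle over $\mathrm{SO}\left(n\right)/C$ with the same fibre. You merely unpack the step the paper invokes as a black box, by exhibiting $\mathrm{SO}\left(n\right)\to\mathrm{SO}\left(n\right)/C$ as a covering (principal $C$-bundle for $C$ finite) and writing the local trivialisations of the associated bundle explicitly, which is a welcome but not materially different elaboration.
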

    
    \begin{proof}
        Consider the bundle $\mathrm{pr}_1\colon\mathbb T^n\rtimes\mathrm{SO}\left(n\right)\to\mathrm{SO}\left(n\right)$ (fibre is $\mathbb T^n$). This bundle is trivial as a topological bundle but with nontrivial group structure. The action of $C$ is given by
        \begin{equation}
            g\left(v,R\right)=\left(gv,gR\right),
        \end{equation}
        implying that $g\cdot \mathrm{pr}_1\left(x,R\right)=\mathrm{pr}_1\left(g\cdot\left(x,R\right)\right)$. Since $ C$ acts equivariantly on this trivial bundle, the quotient is a bundle with the same fibre, i.e., we obtain the bundle $\pi\colon\left(\mathbb T^n\rtimes\mathrm{SO}\left(n\right)\right)/C\to\mathrm{SO}\left(n\right)/C$ with fibre $\mathbb T^n$.
    \end{proof}
    \noindent Figure~\ref{fig:torusbundleovercircle} illustrates the fibre bundle structure of the order parameter space for $n=2$.  The space $\left(\mathbb T^2\rtimes\mathrm{SO}\left(2\right)\right)/C$ is illustrated as a $\mathbb T^2$-bundle over $S^1$ since any discrete subgroup $C\leq\mathrm{SO}\left(2\right)\cong S^1$ will be cyclic (i.e., $\mathrm{SO}\left(2\right)/C\cong S^1$).
    
    \begin{remark}[Unbroken rotational symmetry]
        If we wish to consider defects only due to breaking of translation symmetry we take $C=\mathrm{SO}\left(n\right)$ (this is not discrete, so it is technically not a crystal), then the space $\left(\mathbb T^n\rtimes\mathrm{SO}\left(n\right)\right)/C$ reduces to $\mathbb T^n$ as it is a $\mathbb T^n$-bundle on a single point.
    \end{remark}
    \begin{remark}[Completely broken rotational symmetry]
        In contrast to the above example, we can consider crystals with only translation symmetry ($C=\left\{I_n\right\}$). In this case, $\left(\mathbb T^n\rtimes\mathrm{SO}\left(n\right)\right)/C$ is just $\mathbb T^n\rtimes\mathrm{SO}\left(n\right)$.
    \end{remark}
    \begin{figure}
        \centering
        \includegraphics[scale=0.6]{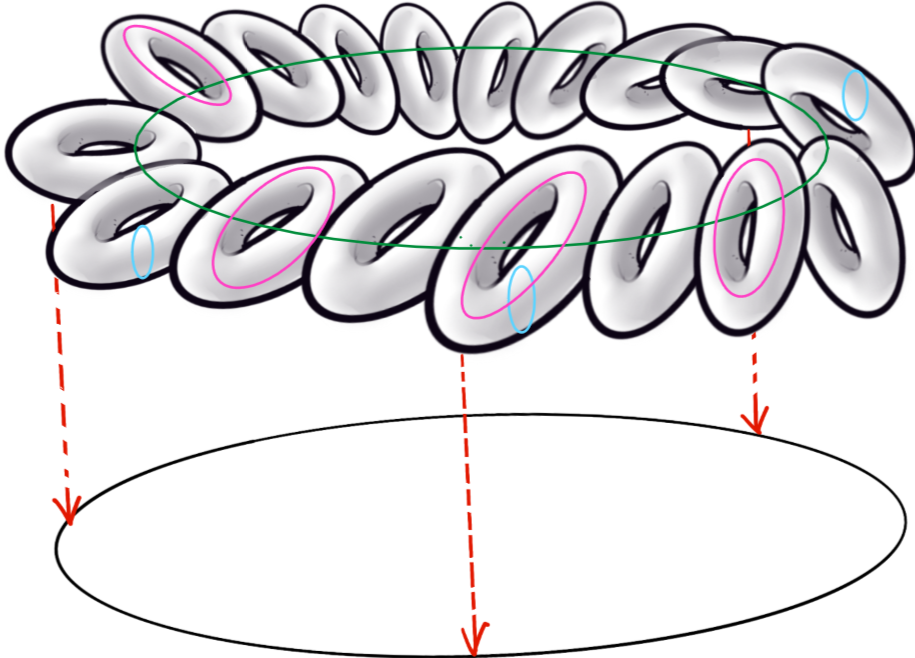}
        \caption{Space $\left(\mathbb T^2\rtimes\mathrm{SO}\left(2\right)\right)/C$ as a $\mathbb T^2$-bundle over $S^1$.}
        \label{fig:torusbundleovercircle}
    \end{figure}
    \subsection{\centering Homotopy groups}
    Let the universal cover of $\mathrm{SO}\left(n\right)$ be $q_n\colon\widetilde{\mathrm{SO}\left(n\right)}\to\mathrm{SO}\left(n\right)$. We have
    \begin{equation}
        \widetilde{\mathrm{SO}\left(n\right)}\cong
        \begin{cases}
            \mathbb R\quad\quad\quad\text{ if }n=2,\\
            \mathrm{Spin}\left(n\right)\text{ if }n>2.
        \end{cases}
    \end{equation}
    By Theorem~\ref{theo:covsemi}, the universal covering map is $\left(\mathrm{id}_{\mathbb R^n},q\right)\colon\mathbb R^n\rtimes\widetilde{\mathrm{SO}}\left(n\right)\to\mathbb R^n\rtimes\mathrm{SO}\left(n\right)$. Using Equation~\ref{eq:discretepihigh0} and the contractibility of $\mathbb R^n$ we can write
    \begin{equation}
        \pi_k\left(\mathbb R^n\rtimes\mathrm{SO}\left(n\right)/\mathbb A\rtimes C\right)\cong\pi_k\left(\mathbb R^n\rtimes\widetilde{\mathrm{SO}\left(n\right)}\right)\cong\pi_k\left(\widetilde{\mathrm{SO}\left(n\right)}\right),
    \end{equation}
    for $k>1$.
    We now compute the fundamental group. By Equation~\ref{eq:pi1univcovsub} we have
    \begin{equation}
         \pi_1\left(\mathbb R^n\rtimes\mathrm{SO}\left(n\right)/\mathbb A\rtimes C\right)\cong\left(\mathrm{id}_{\mathbb R^n},q_n\right)^{-1}\left(\mathbb A\rtimes C\right)=\mathbb A\rtimes q_n^{-1}\left(C\right)\cong\mathbb Z^n\rtimes q_n^{-1}\left(C\right).
    \end{equation}
    The action of $q_n^{-1}\left(C\right)$ on $\mathbb Z^n$ first factors through $q_n$ and is then given by the matrices $M_i$ (the coordinate representations of the generators of $C$). The action of $\pi_1$ on $\pi_n$ is moreover simplified as the free component of $\pi_1$ would have acted on the $\mathbb R^n$ component of $\pi_n$, which is contractible and hence we have the action of $\pi_1$ as just the action of $q_n^{-1}\left(C\right)$ on $\pi
    _n\left(\widetilde{\mathrm{SO}\left(n\right)}\right)$.
    
    \subsection{\centering Example: Point defects in the plane}
    Since $V_a$ is a singleton and $\mathbb R^2\setminus\left\{m\text{ points}\right\}\simeq\left(S^1\right)^{\vee m}$, we use Example~\ref{ex:wedgeofS1pi} and the preceding calculations to write
    \begin{equation}
        \begin{split}
            \mathrm{Def}_{\mathbb R^2}\left(m\text{ points}\right)\cong&\left(\text{conjugacy classes of }\mathbb Z^2\rtimes q_2^{-1}\left(C\right)\right)^m\\
            &\times
            \begin{cases}
                1\text{ if any reflection is a lattice symmetry},\\
                \mathbb Z/2\mathbb Z\text{ otherwise}.
            \end{cases}
        \end{split}
    \end{equation}
    In this case $q_2\colon\mathbb R\to\mathrm{SO}\left(2\right)\simeq S^1$, and $C\leq S^1$ is discrete (and hence must be cyclic of order $N=1,2,3,4,6$ for various lattices). Let the matrix representation of its generator be $M$. Then we have $q^{-1}_2\left(C\right)\cong\mathbb Z$ and $\mathbb Z^2\rtimes q_2^{-1}\left(C\right)\cong\mathbb Z^2\rtimes_M\mathbb Z$. To calculate $M$, we take a lattice, rotate its two basis vectors by $\frac{2\pi}N$, and express it again in the original basis. The coefficients for this will be $M$. Obviously, $M^N=I_2$ since $M$ is the coordinate representation of a rotation by $\frac{2\pi}N$.
    
    Each $\left(\utilde n,n_3\right)$ is an element $\pi_1$ which contains homotopy classes of maps. The map $p\circ\rho_{\utilde n,n_3}\circ\alpha$  is a representative element of the class $\left(\utilde n,n_3\right)$, where
    \begin{equation}
        \begin{split}
            \rho_{\utilde n, n_3}\colon S^1\to&\mathbb R^2\rtimes\mathrm U\left(1\right),\\
            t\mapsto&\left(t\left(n_1a_1+n_2a_2\right),e^{\frac{2\pi in_3t}N}\right),\\
            \alpha\colon\mathbb R^2\setminus\left\{\text{pt}\right\}\to&S^1,\\
            x\mapsto&\frac x{\left\Vert x\right\Vert},\\
            p\colon\mathbb R^2\rtimes\mathrm U\left(1\right)\to&G_0/\left(G_0\cap G_v\right),
        \end{split}
    \end{equation}
    and $a_1,a_2$ are the basis vectors of $\mathbb A$. The fact that these maps represent these classes is easily seen from Section~\ref{subsec:geometryRn} and Figure~\ref{fig:torusbundleovercircle}.
    
    Conjugation in the group $\mathbb Z^2\rtimes_M\mathbb Z$ is given by
    \begin{equation}
        \left(\utilde m,m_3\right)\left(\utilde n,n_3\right)\left(\utilde m,m_3\right)^{-1}=\left(\left(I_2-M^{n_3}\right)\utilde m+M^{m_3}\utilde n,n_3\right).
    \end{equation}
    Thus to form conjugacy classes we need to form the equivalence relation
    \begin{equation}
        \left(\utilde n,n_3\right)\sim\left(\left(I_2-M^{n_3}\right)\utilde m+M^{m_3}\utilde n,n_3\right),\text{ for any }\utilde m\in\mathbb Z^2\text{ and }m_3\in\mathbb Z.
    \end{equation}
    This relation was also obtained in \cite{Trebin82} but without any further applications. 
    Note that since conjugation does not change $n_3$, we can find the conjugacy classes for fixed $n_3$. That is, the set of conjugacy classes can be represented by
    \begin{equation}
        \coprod_{n_3\in\mathbb Z}F_{n_3},
    \end{equation} 
    where $F_{n_3}$ is class of defects at fixed $n_3$ (the disinclination index).
    
    \begin{table}
    \centering
    \footnotesize
    \begin{tabular}{|c|c|c|} 
        \hline
        Lattice&$M$&$F_{n_3}$\\ 
        \hline
        Parallelogram&$\begin{pmatrix}1&0\\0&1\end{pmatrix}$&$\mathbb Z^2$\\
        \hline
        Rectangle&$\begin{pmatrix}-1&0\\0&-1\end{pmatrix}$&$\left\{\utilde0\right\}\cup
        \begin{cases}
            \left\{\left(n_1,n_2\right)\;\vert\;n_2>0\right\}\cup\left\{\left(n_1,0\right)\;\vert\;n_1>0\right\}\text{ if }n_3\equiv0\mod2\\
            \left\{\left(0,1\right),\left(1,0\right),\left(1,1\right)\right\}\quad\quad\quad\quad\quad\quad\quad\;\text{ if }n_3\equiv1\mod2\\
        \end{cases}$\\
        \hline
        Square&$\begin{pmatrix}0&1\\-1&0\end{pmatrix}$&$\left\{\utilde0\right\}\cup
        \begin{cases}
            \left\{\left(n_1,n_2\right)\;\vert\;n_1\geq0,n_2>0\right\}\text{ if }n_3\equiv0\mod4\\
            \left\{\left(0,1\right)\right\}\quad\quad\quad\quad\quad\quad\quad\;\;\,\text{ if }n_3\equiv1,3\mod4\\  
            \left\{\left(0,1\right),\left(1,1\right)\right\}\quad\quad\quad\quad\quad\text{ if }n_3\equiv2\mod4
        \end{cases}$\\
        \hline
        Hexagonal&$\begin{pmatrix}1&1\\-1&0\end{pmatrix}$&$\left\{\utilde0\right\}\cup
        \begin{cases}
            \left\{\left(n_1,n_2\right)\;\vert\;n_1\geq0,n_2>0\right\}\text{ if }n_3\equiv0\mod6\\
            \emptyset\quad\quad\quad\quad\quad\quad\quad\;\,\quad\quad\quad\text{ if }n_3\equiv1,5\mod6\\  
            \left\{\left(0,1\right)\right\}\quad\quad\quad\quad\quad\quad\quad\;\;\,\text{ if }n_3\equiv2,3,4\mod6
        \end{cases}$\\
        \hline
    \end{tabular}
    \caption{Point defects in $\mathbb R^2$ at fixed disclination index.}
    \label{table:pointplane}
\end{table}
    
    We list matrices $M$ and classes $F_{n_3}$ for different lattice structures in Table~\ref{table:pointplane}. An example computation of $F_{n_3}$ for hexagonal lattices is given below.
    \subsubsection{Calculation of point defects in hexagonal lattice at fixed disclination index}
    We now classify the possible non-equivalent $n_1,n_2$ for fixed $n_3$. We note that $\left(M^{m_3}\utilde n,n_3\right)\sim\left(\utilde n,n_3\right)$ for any $m_3\in\mathbb Z$, and thus we need to only consider $\utilde n\in F$, where $F$ is the quotient $\mathbb Z^2/\sim$ (where $\utilde n\sim M^{m_3}\utilde n$) such that
    \begin{equation}
        F=\left\{\left(n_1,n_2\right)\;\vert\;n_1\geq0,n_2>0\right\}\cup\left\{\left(0,0\right)\right\}.
    \end{equation}
    Once we restrict ourselves to $\utilde n\in F$, the equivalence relation is simply (no two elements of $F$ are conjugate through multiplication by $M$)
    \begin{equation}
        \left(\utilde n,n_3\right)\sim\left(\left(I_2-M^{n_3}\right)\utilde m+\utilde n,n_3\right)\text{ for any }\utilde m\in\mathbb Z^2\text{ and }m_3\in\mathbb Z.
    \end{equation}
    We describe the results for various choices of $n_3$ in the following:
    \begin{enumerate}
        \item{$n_3\equiv0\mod6$}: In this case $M^{n_3}=I_2$, hence
        \begin{equation*}
            \left(\utilde n,n_3\right)\sim\left(\utilde n,n_3\right).
        \end{equation*}
        All the combinations $\left(\utilde n,0\right)$ are distinct for all $\utilde n\in F$.
        \item{$n_3\equiv1\mod6$}: In this case $M^{n_3}=M$. Since $I_2-M=M^5$, we write
        \begin{equation*}
            \left(\utilde n,n_3\right)\sim\left(M^5\utilde m+\utilde n,n_3\right).
        \end{equation*}
        Choosing $\utilde m=-M\utilde n$, we obtain that $\left(\utilde n,n_3\right)\sim\left(\utilde0,n_3\right)$ for any $\utilde n$. Thus there is only one conjugacy class here. We denote it by $\left(\utilde0,n_3\right)$.
        \item{$n_3\equiv2\mod6$}: In this case $I_2-M^{n_3}=\begin{pmatrix}1&-1\\1&2\end{pmatrix}$, which leads to
        \begin{equation*}
            \utilde n\sim\utilde n+m_1\left(1,1\right)+m_2\left(-1,2\right)\text{ for }m_1,m_2\in\mathbb Z.
        \end{equation*}
        implying that $\utilde n$ can take the values $\utilde0,~\left(0,1\right),~\text{or}~\left(0,2\right)$. Noting that 
        \begin{equation*}
            \begin{pmatrix}1&1\\-1&0\end{pmatrix}\begin{pmatrix}0\\2\end{pmatrix}-\begin{pmatrix}1\\1\end{pmatrix}= \begin{pmatrix}1&1\\-1&0\end{pmatrix}^2\begin{pmatrix}0\\1\end{pmatrix},
        \end{equation*}
        we write $\left(0,2\right)\sim\left(0,1\right)$. Thus the conjugacy classes are $\left(\utilde0,n_3\right)$ and $\left(\left(0,1\right),n_3\right)$.
        \item{$n_3\equiv3\mod6$}: In this case $M^{n_3}=-I_2$, hence
        \begin{equation*}
            \left(\utilde n,n_3\right)\sim\left(2\utilde m+M^{m_3}\utilde n,n_3\right).
        \end{equation*}
        As a result we are required to consider only those elements of $F$ for which both the components cannot be reduced further mod $2$, i.e., $\utilde0\text{ },\left(1,1\right),\left(0,1\right)$. Since $M\left(1,1\right)=\left(2,-1\right)\equiv\left(0,1\right)$ mod $2$, we have only two equivalence classes given by $\left(\utilde0,n_3\right)$ and $\left(\left(0,1\right),n_3\right)$.
        \item{$n_3\equiv4\mod6$}: An analysis similar to $n_3\equiv2\mod6$ implies that the only independent conjugacy classes are $\left(\utilde0,n_3\right)$ and  $\left(\left(0,1\right),n_3\right)$.
        \item{$n_3\equiv5\mod6$}: In this case $M^{n_3}=M^5=I_2-M$. The equivalence relation is
        \begin{equation*}
            \left(\utilde n,n_3\right)\sim\left(M\utilde m+\utilde n,n_3\right).
        \end{equation*}
        Following the arguments for the $n_3\equiv1\mod6$ case, we obtain that there is only one conjugacy class given by $\left(\utilde0,n_3\right)$.
    \end{enumerate}
    To summarise, the conjugacy classes at fixed $n_3$ are
    \begin{equation}
        F_{n_3}=\left\{\utilde0\right\}\cup
        \begin{cases}
            \left\{\left(n_1,n_2\right)\;\vert\;n_1\geq0,n_2>0\right\}\text{ if }n_3\equiv0\mod6,\\
            \emptyset\quad\quad\quad\quad\quad\quad\quad\;\,\quad\quad\quad\text{ if }n_3\equiv1,5\mod6,\\  
            \left\{\left(0,1\right)\right\}\quad\quad\quad\quad\quad\quad\quad\;\;\,\text{ if }n_3\equiv2,3,4\mod6.
        \end{cases}
    \end{equation}

\subsection{Example: Domain walls}
 Let us consider the defective set $X$ to be a union of $m$ $\left(n-1\right)$-dimensional hyperplanes in $\mathbb R^n$. Then $\mathbb R^n\setminus X$ will be a union of disjoint, contractible spaces (as each piece will be convex). For instance two parallel lines in $\mathbb R^2$ divide it into $3$ pieces, while intersecting lines will divide it into $4$. Therefore
\begin{equation}
    \mathbb R^n\setminus X\simeq M,
\end{equation}
where $M$ represents is a set. The defects will be given by
\begin{equation}
    \label{eq:domwall}
    \mathrm{Def}_{\mathbb R^n}\left(X\right)\cong\left(\coprod_{v\in V_a}\pi_0\left(G\right)/p\left(G_v\right)\right)^M,
\end{equation}
where we have used the fact that each $A\in\pi_o\left(\mathbb R^n\setminus X\right)$ is contractible. Hence
\noindent$\mathrm{hTop}\left[A,G_0/\left(G_0\cap G_v\right)\right]\cong\pi_0\left(G_0/\left(G_0\cap G_v\right)\right)$ which is $1$ using path-connectedness of 
\noindent$G_0/\left(G_0\cap G_v\right)$. In the context of material science, Equation~\ref{eq:domwall} tells us that such defects are equivalent to choosing cardinality of $M$ worth of crystal structures and a chirality for each crystal structure, one structure for each disconnected contractible piece. These are commonly known as domain walls.

\subsection{\centering Example: Textures}
A texture is a global defect where $X$, the defective subset of the manifold, is empty but the map $M \to V$ is not homotopy to the trivial constant map~\cite{Finkelstein66}. 
In flat crystals, we get nontrivial textures when boundary conditions are imposed. We see this in the following two computations. 

First we calculate the set of textures on $\mathbb R^n$ as  
\begin{equation}
    \mathrm{hTop}\left[\mathbb R^n,G_0/\left(G_0\cap G_v\right)\right]\times\frac{\mathbb Z/2\mathbb Z}{p\left(G_v\right)}\cong\pi_0\left(G_0/\left(G_0\cap G_v\right)\right)\times\frac{\mathbb Z/2\mathbb Z}{p\left(G_v\right)}\cong\frac{\mathbb Z/2\mathbb Z}{p\left(G_v\right)},
\end{equation}
where in the first equality we use the contractibility of $\mathbb R^n$ and in the second we use the fact that $G_0/\left(G_0\cap G_v\right)$ is connected.
Second, we take boundary conditions such that the deformation of a crystal due to a defect vanishes at infinity. As explained in Remark~\ref{rem:bdry}, we can incorporate such a condition by considering maps out of the $1$-point compactification of $\mathbb R^n$, i.e., $S^n$. The textures satisfying the boundary conditions are given by
\begin{equation}
    \mathrm{hTop}\left[S^n,G_0/\left(G_0\cap G_v\right)\right]\times\frac{\mathbb Z/2\mathbb Z}{p\left(G_v\right)}\cong\frac{\pi_n\left(G_0/\left(G_0\cap G_v\right)\right)}{\pi_1\left(G_0/\left(G_0\cap G_v\right)\right)}\times\frac{\mathbb Z/2\mathbb Z}{p\left(G_v\right)}\cong\frac{\pi_n\left(\widetilde{\mathrm{SO}\left(n\right)}\right)}{q_n^{-1}\left(C\right)}\times\frac{\mathbb Z/2\mathbb Z}{p\left(G_v\right)}.
\end{equation}
In the plane, only two topologically distinct textures are possible if the lattice has chirality (no reflection symmetry). There are no nontrivial textures otherwise as $\pi_2\left(\mathbb R^2\right)\cong0$. However, they are possible in $\mathbb R^3$ even for achiral lattices as $\pi_3\left(\mathrm{Spin}\left(3\right)\right)\cong\mathbb Z$.  If we ignore boundary conditions, only chirality based textures are possible in any dimension.
In Table~\ref{table:cohomology}, as we shall see below, nontrivial textures can also arise for non-contractible manifolds like cylinder, tori, and annulus, even without explicit boundary conditions.

\section{\centering Crystals in spheres}
\label{sec:crystalsS}
For crystals in $S^n$ we consider, in accordance with Section~\ref{sec:crystals}, 
\begin{equation}
    G=\mathrm{Isom}\left(S^n\right)\cong\mathrm O\left(n+1\right),
\end{equation}
where $\mathrm O\left(n+1\right)$ acts on $S^n$ in the natural way. Hence $\pi_0\left(G\right)\cong\mathbb Z/2\mathbb Z$ and $G_0\cong\mathrm{SO}\left(n+1\right)$. Accordingly $\pi_0\left(G\right)/p\left(G_v\right)$ is either trivial or $\mathbb Z/2\mathbb Z$ depending on whether $G_v$ contains a reflection or not. Also, as discussed in Section~\ref{sec:crystals}, we have to look for discrete cocompact subgroups $\Gamma=_\text{def}G_0\cap G_v$ of $\mathrm{SO}\left(n+1\right)$. Since $S^n$ is already compact, we just need to look at discrete $\Gamma$ for it will be automatically cocompact.
\subsection{\centering Homotopy groups}
    By Equation~\ref{eq:discretepihigh0} we have, for $k>1$, 
    \begin{equation}
        \pi_k\left(\mathrm{SO}\left(n+1\right)/\Gamma\right)\cong\pi_k\left(\mathrm{Spin}\left(n+1\right)\right),
    \end{equation}
    where the fact that $\widetilde{\mathrm{SO}\left(n+1\right)}=\mathrm{Spin}\left(n+1\right)$, for $n>1$, has been used. On the other hand, we use Equation~\ref{eq:pi1univcovsub} to write
    \begin{equation}
         \pi_1\left(\mathrm{SO}\left(n+1\right)/\Gamma\right)\cong q_{n+1}^{-1}\left(\Gamma\right),
    \end{equation}
    where $q_{n+1}\colon\mathrm{Spin}\left(n+1\right)\to\mathrm{SO}\left(n+1\right)$ is as defined in the previous section.
    \subsection{\centering Example: point defects in the 2-sphere}
    We have
    \begin{equation*}
        S^2\setminus\left\{m\text{ points}\right\}\simeq
        \begin{cases}
            S^2\text{ if }m=0,\\
            \left(S^1\right)^{\vee m-1}\text{ if }m>0.
        \end{cases}
    \end{equation*}
    Therefore, for $m=0$, we obtain
    \begin{equation}
        \begin{split}
            \mathrm{Def}_{S^2}\left(\emptyset\right)&\cong\mathrm{hTop}\left[S^2,\mathrm{SO}\left(3\right)/\Gamma\right]\times\frac{\mathbb Z/2\mathbb Z}{p\left(\Gamma\right)}\cong\frac{\pi_2\left(\mathrm{SO}\left(3\right)/\Gamma,\Gamma\right)}{\pi_1\left(\mathrm{SO}\left(3\right)/\Gamma,\Gamma\right)}\times\frac{\mathbb Z/2\mathbb Z}{p\left(\Gamma\right)}\\
            &\cong\frac{\pi_2\left(\mathrm{SO}\left(3\right),\Gamma\right)}{q_3^{-1}\left(\Gamma\right)}\times\frac{\mathbb Z/2\mathbb Z}{p\left(\Gamma\right)}\cong\frac{\mathbb Z/2\mathbb Z}{p\left(\Gamma\right)},
        \end{split}
    \end{equation}
    where we have used Equation~\ref{eq:htpygroupsarise} in the second step and the homotopy groups calculation from the previous section in the third. In the fourth step we used the fact that $\pi_2$ of any Lie group is trivial. For $m\geq1$, we recall Example~\ref{ex:wedgeofS1pi} to write
    \begin{equation}
        \mathrm{Def}_{S^2}\left(\left\{m\text{ points}\right\}\right)\cong\left(\text{conjugacy classes of }q_3^{-1}\left(\Gamma\right)\right)^{m-1}\times\frac{\mathbb Z/2\mathbb Z}{p\left(\Gamma\right)}.
    \end{equation}
    In the above relationships, $q_3^{-1}\left(\Gamma\right)$ is a discrete subgroup of $\mathrm{Spin}\left(3\right)$. Accordingly we only classify such discrete subgroups and find their conjugacy classes. This is achieved by using the exceptional isomorphism $\mathrm{Spin}\left(3\right)\cong\mathrm{SU}\left(2\right)$, and the ADE classification of discrete subgroups of $\mathrm{SU}\left(2\right)$. This classification of subgroups, and the number of associated conjugacy classes, is listed in Table~\ref{table:pointsphere}, where the number of conjugacy classes have been computed using the quaternion representation of $SU\left(2\right)$. An analogous  classification of two dimensional spherical crystals by lattice types was discussed in~\cite{Kleman89} although without an explicit calculation of the number of conjugacy classes.
    
\begin{table}
    \small
    \centering
    \begin{tabular}{| c | c | c | c | c |} 
        \hline
        Group & Point Group & Order & Angles & Number of Conjugacy Classes\\\hline
        Binary Cyclic: $A_n$&$\left(1,n,n\right)$&$2n$&$\frac{2\pi}n$&$n$\\\hline
        Binary Dihedral: $D_n$&$\left(2,2,n\right)$&$4n$&$\frac{2\pi}n,\pi$&$n+3$\\ \hline
        Binary Tetrahedral: $E_6$&$\left(2,3,3\right)$&$24$&$\frac{2\pi}3,\pi$&$7$\\\hline
        Binary Octahedral: $E_7$&$\left(2,3,4\right)$&$48$&$\frac{2\pi}3,\frac{\pi}2,\pi$&$9$\\\hline
        Binary Icosahedral: $E_8$&$\left(2,3,5\right)$&$120$&$\frac{2\pi}5,\frac{2\pi}3,\pi$&$11$\\\hline
    \end{tabular}
    \caption{The binary polyhedral groups (discrete subgroups of $\mathrm{SU}\left(2\right)$) and their conjugacy classes.}
    \label{table:pointsphere}
\end{table}

\section{\centering Cohomological Examples}
\label{sec:crystalsC}

Having previously discussed crystals on homogeneous spaces, both Euclidean and spherical, we now study defects in crystals on more general manifolds. 
Suppose $G_0\cong\mathbb R^n\times\mathbb T^m$. Since $G_v$ is discrete for crystals, it is closed in $G$. Therefore $G_v\cap G_0$ is closed in $G_0$ and is hence a cocompact closed subset of an abelian Lie group. The quotient $G_0/\left(G_v\cap G_0\right)$ will then be a compact abelian Lie group of the same dimension as $G_0$, i.e., the torus $\mathbb T^{n+m}$.
We can thereafter use Example~\ref{ex:cohomology} to evaluate the set of defects as
\begin{equation}
    \label{eq:coho}
    \mathrm{Def}_M\left(X\right)\cong\prod_{A\in\pi_0\left(M\setminus X\right)}H^1_\mathrm{sing}\left(A,\mathbb Z\right)^{n+m}\times\pi_0\left(G\right)/p\left(G_v\right).
\end{equation}

Some systems for which $G_0/\left(G_v\cap G_0\right)$ is a torus are given in Table~\ref{table:cohomology}. In each of these cases $X$ is a set of $m$ points. The relevant figures for the deformation retracts are also given. The set of defects is calculated using Equation~\ref{eq:coho}. We can easily compute $\pi_0\left(G\right)/p\left(G_v\right)$ since $p\left(G_v\right)$ is a subgroup of $\pi_0\left(G\right)$ which are classified in a straightforward manner. For a 2D cylinder, $2$-Torus, and an infinite 2D flat annulus embedded in $\mathbb{R}^3$, the metric is induced from the embedding and the corresponding isometry groups are mentioned as $G$. We have also considered an $n$-dimensional flat torus, obtained by quotienting $\mathbb{R}^n$ by a lattice, whose isometry group consists of translations (quotiented by the lattice) and those rotations that preserve the lattice. A 2D flat torus is diffeomorphic to the regular torus but not isometric to it. An explicit fractal-like $C^1$ isometric embedding of the flat torus in $\mathbb{R}^3$ has been found~\cite{Borrelli7218}. An important distinction in the defects structure of the non-contractible manifolds considered here is that even when there are no point singularities, the case where the defective set is empty, i.e., $X=\emptyset$, $\mathrm{Def}_M\left(X\right)$ is still non-trivial. There can be global textures on these manifolds. 

\begin{table}
    \centering
    \footnotesize
    \begin{tabular}{||c||c|c|c|c|} 
    \hline
        Examples&\textbf{$2$D Cylinder}&\textbf{$2$D Torus}&\textbf{$n$D Flat Torus}&\textbf{$2$D Annulus}\\
        \hline
        $M$&\includegraphics[width=0.1675\textwidth]{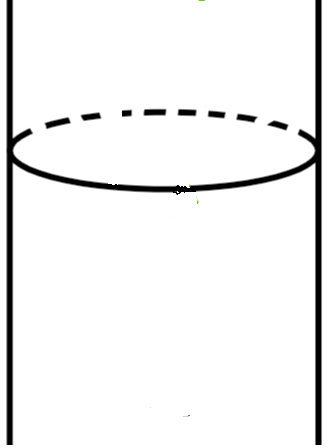}&\includegraphics[width=0.11\textwidth]{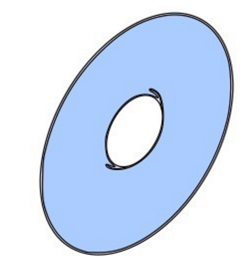} & \includegraphics[width=0.1675\textwidth]{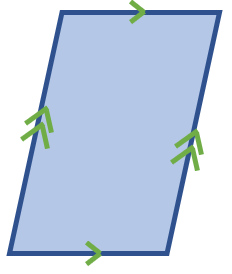} &\includegraphics[width=0.1675\textwidth]{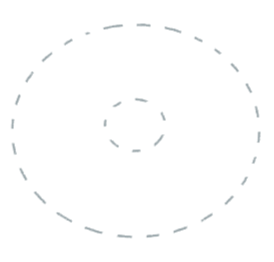} \\
        &$\mathbb R\times S^1$ in $\mathbb R^3$&$\mathbb T^2$ in $\mathbb R^3$&$\mathbb R^n/\Lambda$ (Flat Tori)&$S^1\times\left(0,\infty\right)$ in $\mathbb R^2$\\\hline
        $G$&$\left(\mathbb R\times S^1\right)\rtimes V_4$&$S^1\rtimes V_4$&$\mathbb T^n\rtimes\mathrm{Aut}\left(\Lambda\right)$&$S^1\rtimes\mathbb Z/2\mathbb Z$\\\hline
        $G_0$&$\mathbb R\times S^1$&$S^1$&$\mathbb T^n$&$S^1$\\\hline
        $\pi_0\left(G\right)$&$V_4$&$V_4$&$\mathrm{Aut}\left(\Lambda\right)$&$\mathbb Z/2\mathbb Z$\\\hline
        $M\setminus X$&\includegraphics[width=0.1675\textwidth]{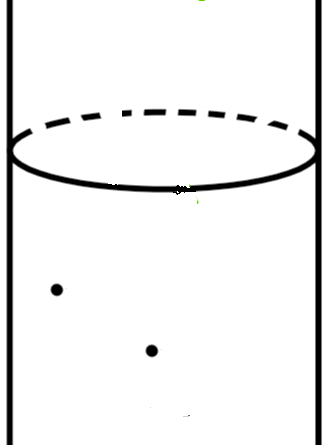}&\includegraphics[width=0.11\textwidth]{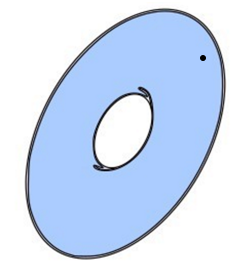} &\includegraphics[width=0.1675\textwidth]{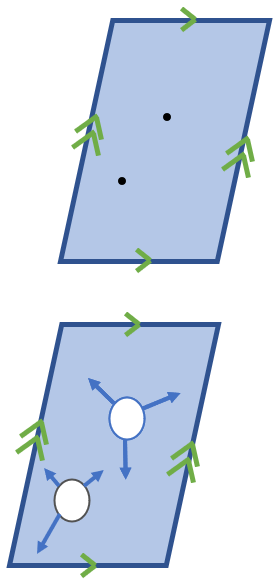}&\includegraphics[width=0.1675\textwidth]{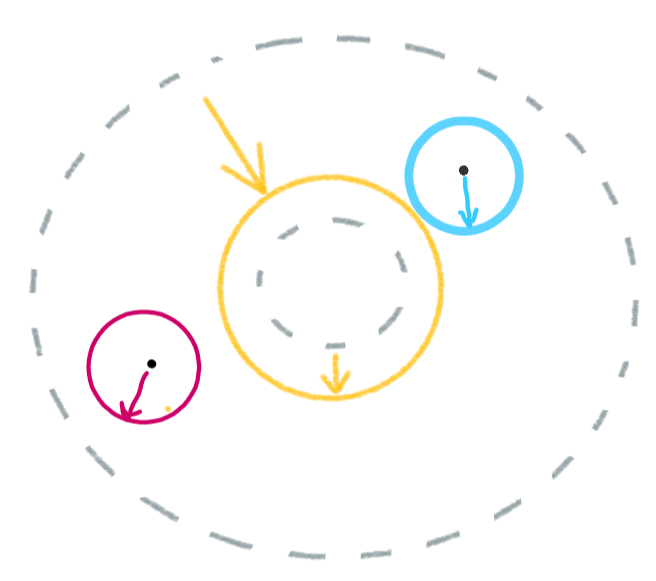} \\
        &\includegraphics[width=0.1675\textwidth]{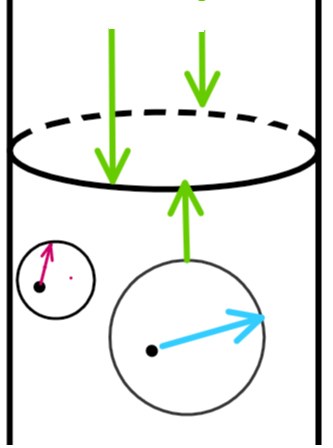}&\includegraphics[width=0.11\textwidth]{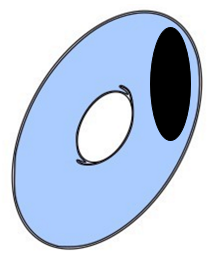}&\includegraphics[width=0.1675\textwidth]{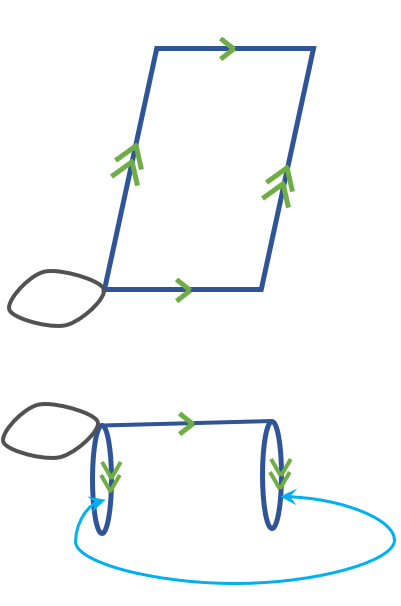}& \\
        &\includegraphics[width=0.1675\textwidth]{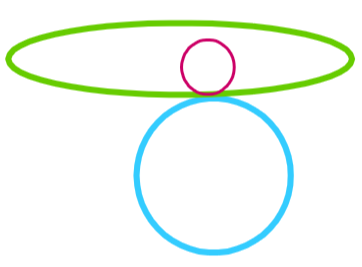}&\includegraphics[width=0.11\textwidth]{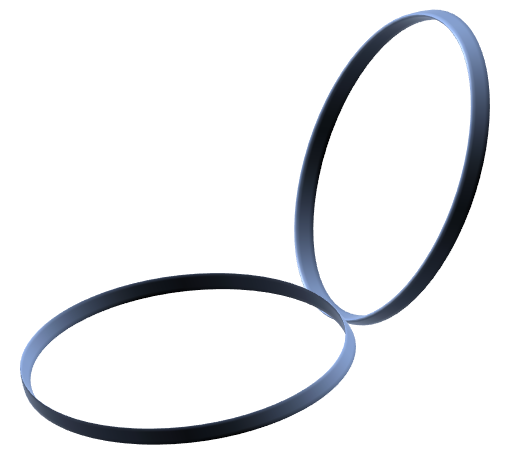} &\includegraphics[width=0.1675\textwidth]{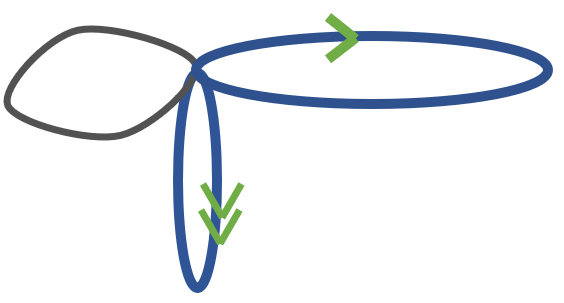}&\includegraphics[width=0.1675\textwidth]{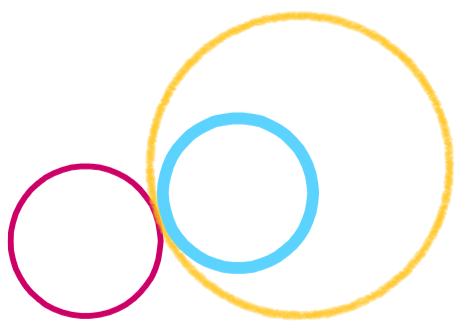}\\
        &$\left(S^1\right)^{\vee m+1}$&$\begin{cases}
            m=0\colon\mathbb T^2\\
            \left(S^1\right)^{\vee m+1}
        \end{cases}$&$\begin{cases}
            m=0\colon\mathbb T^n\\
            \left(S^{n-1}\right)^{\vee m+1}
        \end{cases}$&$\left(S^1\right)^{\vee m+1}$\\\hline
        $H^1_\text{sing}\left(M\setminus X,\mathbb Z\right)$&$\mathbb Z^{m+1}$&$\mathbb Z^{m+1+\delta_{m0}}$&$\begin{cases}
            m=0\colon\mathbb Z^n\\
            m\neq0,n\neq2\colon0\\
            \mathbb Z^{m+1+\delta_{m0}}
        \end{cases}$
        &$\mathbb Z^{m+1}$\\\hline
        
    \end{tabular}
    \caption{Examples of crystals for which $G_0/\left(G_0\cap G_v\right)$ is a torus.}
    \label{table:cohomology}
\end{table}

\section{\centering Physical Interpretation}
\label{sec:phyjust}
Given the setup of Section~\ref{sec:introchaptopdef}, we are interested in calculating the partition function (which we can further use to calculate the statistical properties of the system)~\cite[Chapter 6]{Sethna21},
\begin{equation}
    \label{eq:partfun}
    Z_M\left(\beta\right)=\int_{\mathcal C_M}e^{-\beta\mathcal H_M}.
\end{equation}
The partition function can be evaluated by perturbing around the minima of $\mathcal H_M$ and making a saddle-point approximation. To compute the path integral we need to sum over the perturbations around the minimum from each path-connected component. For the saddle-point to make sense, we would want to look at path-connected components of $\mathcal C_M$. This is inspired from the technique of summing over instantons~\cite{nlab:instanton} to obtain non-perturbative corrections to the partition function in QFT.  
In other words, we want to split Equation~\ref{eq:partfun} as
\begin{equation}
    Z_M\left(\beta\right)=\sum_{X\in\pi_0\left(C_M\right)}\int_Xe^{-\beta\mathcal H_M}.
\end{equation}
Therefore, to evaluate the partition function, we must understand $\pi_0\left(\mathcal C_M\right)$ as well as the space of minima of $\mathcal H_M$, i.e., $V_M$. Next, we turn to $\pi_0\left(\mathcal C_M\right)$, which is a more complex object. In order to move further, we will require some simplifying assumptions.
\begin{definition}
    Given a partial function $f\colon M\rightharpoonup N$, let $D_f\subseteq M$ to be the subset on which it is not defined. If $M,N$ are topological spaces, we call $f$ continuous if $f\vert_{M\setminus D_f}\colon M\setminus D_f\to N$ is continuous. Let $\mathrm{Top_{part}}$ be the category of topological spaces and continuous partial functions.
\end{definition}
\begin{definition}
    \label{def:homotopypartfunc}
    For two continuous partial functions $f,g\colon M\rightharpoonup N$, a homotopy from $f$ to $g$ is a pair $\left(\alpha,H\right)$ where  $\alpha\colon M\setminus D_f\xrightarrow{\sim} M\setminus D_g$ is a homotopy equivalence of spaces and $H$ is a homotopy from $f\vert_{M\setminus D_f}$ to $g\vert_{M\setminus D_g}\circ\alpha$. That is, the following diagram commutes up to the homotopy $H$,
    \begin{equation}
        \begin{tikzcd}
            M\setminus\mathrm D_f\arrow[rd,"H",Rightarrow]\arrow[dd,"\alpha"',"\simeq"]\arrow[rr,"f\vert_{M\setminus D_f}"]&&N\\&\text{ }&\\
            M\setminus D_g\arrow[rruu,"g\vert_{M\setminus D_g}"']
        \end{tikzcd}
    \end{equation}
    Let $\mathrm{hTop_{part}}$ denote the category with topological spaces as objects and morphisms defined by
    \begin{equation}
        \mathrm{hTop_{part}}\left[M,N\right]=\mathrm{Top_{part}}\left[M,N\right]/\text{homotopy}.
    \end{equation}
    There is an obvious functor $Q\colon\mathrm{Top_{part}}\to\mathrm{hTop_{part}}$ which identifies morphisms if they are homotopic.
\end{definition}
\begin{assumption}
    \label{assump:main}
    $\pi_0\left(\mathcal C_M\right)\cong\mathrm{hTop_{part}}\left[M,V_M\right]$.
\end{assumption}
\noindent This equivalence is the key to the interpretation of topological classification of defects in the sense introduced in Definition~\ref{def:def}. In order to justify Assumption~\ref{assump:main}, we relate it to a set of physically motivated assumptions below. Following Remark~\ref{rem:notation}, we will suppress the $M$ dependence whenever it is clear from the context at hand.

\noindent \textbf{A1} \textit{For any point $p\in M$, we have a function $d_p\colon\mathcal C\times\mathcal C\to\mathbb R_{\geq0}$ which measures the `closeness' of two configurations `around' $p$.}

 \noindent   This can be used to introduce a `localising' function $\mathrm{Loc}\colon\mathcal C\to\mathrm{Set_{part}}\left[M,V\right]$ such that
    \begin{equation}
      \label{eq:loc}
        \mathrm{Loc}_\phi\left(p\right)=\operatorname{\mathrm{arg\,min}}_{v\in V}d_p\left(\phi,v\right)\text{ where }\phi\in\mathcal C,p\in M.
    \end{equation}
 Hence, given a configuration (for crystals, a deformed lattice), we assign to each point of the manifold the vacuum (for crystals, a perfect lattice) which is closest to the given configuration at that point. For some points $p$, there may not be a unique $v\in V$ which minimises $d_p\left(\phi,v\right)$. At such points we leave $\mathrm{Loc}_\phi\left(p\right)$ undefined.

\noindent  \textbf{A2} \textit{For any $\phi\in\mathcal C$, $\mathrm{Loc}_\phi\left(p\right)$ is a continuous partial function.}

  \noindent  Thus, we actually have a function $\mathrm{Loc}\colon\mathcal C\to\mathrm{Top_{part}}\left[M,V\right]$.
    
\noindent \textbf{A3} \textit{For any $g\in\mathrm{Top_{part}}\left[M,V\right]$, there is a $\phi\in\mathcal C$ such that there is a homotopy of partial functions from $\mathrm{Loc}_\phi$ to $g$.}

\noindent This is physically reasonable because, given the structure of local configurations, we should be able to create a global configuration that is deformable to the original one. This assumption implies that $Q\vert_{\mathrm{Top_{part}}\left[M,V\right]}\circ\mathrm{Loc}\colon\mathcal C\to\mathrm{hTop_{part}}\left[M,V\right]$ is surjective, where $Q$ is the homotopy quotient functor defined above.
    
\noindent \textbf{A4} \textit{For any $\phi_1,\phi_2\in\mathcal C$}
        \begin{equation}
            \exists \text{ a homotopy }\left(\alpha,H\right)\colon\mathrm{Loc}_{\phi_1}\xRightarrow{\left(\alpha,H\right)}\mathrm{Loc}_{\phi_2}\iff\text{there is a path from }\phi_1\text{ to }\phi_2\text{ in }\mathcal C.
        \end{equation}
   
 \noindent   Accordingly, two configurations which are deformable to each other by perturbations, i.e., they are path connected in the configuration space, must also be locally deformable to each other. Here, the notion of locally deformable is captured by the existence of a homotopy of partial functions between $\mathrm{Loc}_{\phi_1}$ and $\mathrm{Loc}_{\phi_2}$. Conversely, if two configurations are locally deformable to each other, then one can reach the other through perturbations. The $\Leftarrow$ and $\Rightarrow$ parts of Assumption \textbf{A4} imply, respectively, that $Q\vert_{\mathrm{Top_{part}}\left[M,V\right]}\circ\mathrm{Loc}$ descends to a map $\Phi\colon\pi_0\left(\mathcal C\right)\to\mathrm{hTop_{part}}\left[M,V\right]$ and that $\Phi$ in injective. Since we already had surjectivity, this gives a proof of Assumption~\ref{assump:main} from a set of physically motivated assumptions. Moreover, Assumptions \textbf{A3} and \textbf{A4} guarantee that the elements of $\mathcal C$ are accurately represented by continuous partial maps $M\rightharpoonup V$ (with maybe some continuous deformations).  

\begin{definition}
    For a space $M$, we define its homotopy powerset as
    \begin{equation}
        \mathrm h\mathcal P\left(M\right)=\mathcal P\left(M\right)/\sim,~\text{where }X_1,X_2\subseteq M\text{ satisfy }X_1\sim X_2\text{ if }M\setminus X_1\simeq M\setminus X_2.
    \end{equation}
\end{definition}
\begin{theorem}[Relationship between defects and components of the configuration space]
    \label{theo:pi0canddefrel}
    \begin{equation}
        \pi_0\left(\mathcal C_M\right)=\coprod_{\tilde X\in \mathrm h\mathcal P\left(M\right)}\mathrm{Def}_M\left(\tilde X\right)
    \end{equation}
\end{theorem}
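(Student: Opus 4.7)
The plan is to start from Assumption~\ref{assump:main}, which already identifies $\pi_0(\mathcal{C}_M)$ with $\mathrm{hTop_{part}}[M,V_M]$, and then show that this second set decomposes as a coproduct indexed by $\mathrm{h}\mathcal{P}(M)$, with each summand equal to $\mathrm{Def}_M(\tilde X) = \mathrm{hTop}[M\setminus X, V]$. So the theorem will follow once I establish the bijection
\begin{equation*}
    \mathrm{hTop_{part}}[M,V] \;\cong\; \coprod_{\tilde X \in \mathrm{h}\mathcal{P}(M)} \mathrm{hTop}[M\setminus X, V].
\end{equation*}

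First I would define the ``domain of undefinedness'' map $\delta\colon \mathrm{hTop_{part}}[M,V] \to \mathrm{h}\mathcal{P}(M)$ sending the class of a continuous partial function $f\colon M \rightharpoonup V$ to the class of $D_f$. By Definition~\ref{def:homotopypartfunc}, two partial functions $f,g$ are homotopic only if there is a homotopy equivalence $\alpha\colon M\setminus D_f \xrightarrow{\sim} M\setminus D_g$; this precisely says $D_f \sim D_g$ in $\mathrm{h}\mathcal{P}(M)$, so $\delta$ is well-defined. This organises the set $\mathrm{hTop_{part}}[M,V]$ as a disjoint union over $\mathrm{h}\mathcal{P}(M)$ of the fibres $\delta^{-1}(\tilde X)$, and the remaining task is to identify each fibre with $\mathrm{Def}_M(\tilde X)$.

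Next, for each class $\tilde X \in \mathrm{h}\mathcal{P}(M)$, I would fix a representative $X \subseteq M$ and construct a bijection $\delta^{-1}(\tilde X) \xrightarrow{\sim} \mathrm{hTop}[M\setminus X, V]$. Given $[f] \in \delta^{-1}(\tilde X)$, choose a homotopy equivalence $\beta\colon M\setminus X \xrightarrow{\sim} M\setminus D_f$ (which exists since $D_f \sim X$) and send $[f]$ to the ordinary homotopy class $[f|_{M\setminus D_f} \circ \beta]$. For well-definedness, if $f \simeq g$ via $(\alpha,H)$ in the sense of Definition~\ref{def:homotopypartfunc}, then for any chosen $\beta_f, \beta_g$ the composites $f|\circ\beta_f$ and $g|\circ\beta_g$ are homotopic as total maps because $\alpha\circ\beta_f \simeq \beta_g$ (any two homotopy equivalences between the same pair of spaces representing $\tilde X$ become homotopic after post-composition with an appropriate self-equivalence, and we can absorb this ambiguity into $H$); independence from the choice of $\beta$ follows the same way. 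In the other direction, any total map $h\colon M\setminus X \to V$ is itself a continuous partial function $M \rightharpoonup V$ with $D_h = X$ lying in $\delta^{-1}(\tilde X)$, giving a section, and the two constructions are inverse to one another up to the equivalence relation of Definition~\ref{def:homotopypartfunc}.

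The main obstacle is the middle step: verifying that the choice of the homotopy equivalence $\beta$ in the identification $\delta^{-1}(\tilde X) \cong \mathrm{hTop}[M\setminus X,V]$ does not matter, and that the construction is compatible with the equivalence $(\alpha,H)$ of partial functions. Concretely, one must show that if $\beta_1,\beta_2\colon M\setminus X \to M\setminus D_f$ are two homotopy equivalences, then $f|\circ\beta_1 \simeq f|\circ\beta_2$ as maps into $V$; this reduces to checking that post-composition with $f|$ identifies appropriate homotopy classes, which is not automatic for arbitrary spaces but is innocuous here because we only need equality inside $\mathrm{hTop}[M\setminus X, V]$, not a canonical equivalence. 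After this bookkeeping, combining the fibre-identification with the decomposition furnished by $\delta$ and Assumption~\ref{assump:main} yields the desired formula.
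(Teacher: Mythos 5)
Your proof follows essentially the same route as the paper's: decompose $\mathrm{hTop_{part}}\left[M,V\right]$ according to the class of the domain of undefinedness in $\mathrm h\mathcal P\left(M\right)$, identify each fibre with $\mathrm{hTop}\left[M\setminus X,V\right]$, and then invoke Assumption~\ref{assump:main}. The one delicate point you flag --- independence of the choice of homotopy equivalence $\beta$ --- is exactly the step the paper asserts without argument (that the image under $Q$ of $\mathrm{Top}\left[M\setminus X_1,V\right]$ is $\mathrm{hTop}\left[M\setminus X_1,V\right]$); be aware that your claim $f\vert\circ\beta_1\simeq f\vert\circ\beta_2$ is false in general (take $M\setminus X\simeq S^1$, $f\vert$ essential, and $\beta_2^{-1}\circ\beta_1$ a degree~$-1$ self-equivalence), so strictly each fibre of your map $\delta$ is the quotient of $\mathrm{hTop}\left[M\setminus X,V\right]$ by precomposition with self-homotopy-equivalences of $M\setminus X$ --- a gap you share with, but do not introduce beyond, the paper's own proof.
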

\begin{proof}
    Note that we can write $\mathrm{Top_{part}}\left[M,V\right]$ as
    \begin{equation}
        \mathrm{Top_{part}}\left[M,V\right]\cong\coprod_{X\subseteq M}\mathrm{Top}\left[M\setminus X,V\right].
    \end{equation}
    If we have a homotopy equivalence $\alpha\colon M\setminus X_1\xrightarrow{\sim}M\setminus X_2$ then from Definition~\ref{def:homotopypartfunc} it follows that the sets $\mathrm{Top}\left[M\setminus X_1,V\right]$ and $\mathrm{Top}\left[M\setminus X_2,V\right]$ have the same image under $Q$. The image is $\mathrm{hTop}\left[M\setminus X_1,V\right]$, where $X_1$ can be replaced by $X_2$ as the result is independent of this choice. The identification
    \begin{equation}
        \mathrm{hTop_{part}}\left[M,V\right]=\coprod_{\tilde X\in \mathrm h\mathcal P\left(M\right)}\mathrm{Def}_M\left(X\right),
    \end{equation}
    where $X$ is any representative of $\tilde X$, combined with Assumption~\ref{assump:main}, yields the desired result.
\end{proof}
We observe, in particular, that choosing an element of $\pi_0\left(\mathcal C\right)$ involves a choice of a defective subset (up to homotopy) and a choice of defect from $\mathrm{Def}_M\left(X\right)$. Substituting Theorem~\ref{theo:defclass} in Theorem~\ref{theo:pi0canddefrel}, we obtain
\begin{equation}
    \pi_0\left(\mathcal C\right)\cong\coprod_{\tilde X\in \mathrm h\mathcal P\left(M\right)}\prod_{A\in\pi_0\left(M\setminus X\right)}\coprod_{v\in V_a}\mathrm{hTop}\left[A,G_0/\left(G_0\cap G_v\right)\right]\times\pi_0\left(G\right)/p\left(G_v\right).
\end{equation}
    Computing $\mathrm h\mathcal P\left(M\right)$ is extremely difficult for commonly encountered manifolds. We are therefore able to calculate $\mathrm{Def}_M\left(X\right)$ only for certain special cases of simple subsets $X\subseteq M$. This choice gives us the expression for classification of topological defects, defective on $X$, as given in Equation~\ref{eq:expandedexpression}.

\section{\centering Critique}
\label{sec:critique}
There are inconsistencies in the traditional interpretation of the classification results. For instance, when we identify the possible point defects in $\mathbb R^2$ (for any lattice shape), we obtain classes of triples of integers. In the literature these are interpreted in terms of components of a dislocation's Burgers vector and a disclination index. However, we note that not all triples represent distinct configurations. The failure for all triples to remain distinct arises from the semidirect product, or the twisted bundle nature, of the isometry group (and the space of global lattices). Moreover, arbitrary loops cannot be assigned triples of integers (we observe that only conjugacy classes have physical meaning). Therefore even if we tried to interpret the results in an ad hoc way, e.g., by identifying triples with the winding numbers, we may be lead to failure. Indeed, we know that there is a homotopy between loops within the same conjugacy class but how this translates to deformations in a lattice is ambiguous due to the gluing problem discussed next.

Given a map into the space of global lattices (order parameter space), it is unclear how to glue them together to form a deformed lattice. This occurs due to the global nature of the order parameter space, which derives from taking $G$ to be $\mathrm{Isom}(M)$ and the local nature of the gluing operation.
For general manifolds, there are both basepoint-preserving isometries (rotations in $\mathbb R^n$) and isometries that shift every point (translations in $\mathbb R^n$). Capturing both of them in the isometry group (and its quotient) hinders the local nature of gluing. However, both of these isometries are essential to describing general defects.
\begin{equation}
    \begin{tikzcd}
        \text{Deformed lattices}\arrow[r,bend left=20,"\text{Localising}"]&M\to\text{Global lattices}\arrow[l,bend left=20,"\text{Gluing}"]
    \end{tikzcd}
    \label{eq:locGlu}%
\end{equation}
 As demonstrated in Equation~\ref{eq:locGlu}, parallel to the gluing process, we have the process of localising, i.e., assigning a locally perfect lattice to every point of the manifold. Given any configuration $\phi \in \mathcal{C}_M$, which can be thought of as a deformed lattice, we have defined a map $\mathrm{Loc}_{\phi}$, in Equation~\ref{eq:loc}, which assigns a local lattice to each point of the manifold. Since lattices are discrete, to define `around' we need to choose a finite open set around $p$, a choice which is not canonical. If, however, we have a continuum limit, then this problem could be resolved using germs. Nevertheless by working only in the continuum limit we risk losing some defect data. Moreover, as in the case of certain compact manifolds (e.g., $S^2$), all the lattices are necessarily finite and there may not be any way to directly compute the continuum limit. 
           
Another inconsistency is noted by recalling that quotients by discrete subgroups are not always perfect crystals. Suppose we want to put a hexagonal lattice on $S^2$. Then, by classification theory, we should look for an $H\leq\mathrm{Isom}\left(S^2\right)\cong\mathrm O\left(3\right)$ (discrete with compact coset space) whose fundamental domain is a regular $6$ sided figure. However, we know that such an $H$ does not exist. An alternative could be to consider the tiling of $S^2$ by hexagons interspersed with $12$ pentagons (icosahedral symmetry). We would like to think of the pentagons as `defects' in a pure hexagonal lattice, but the classification process would count the above as a proper lattice as it has a discrete symmetry group and compact fundamental domain.

It is possible to have additional (non-isometry) symmetries of the Hamiltonian, such as dilation invariance, conformal invariance or additional internal degrees of freedom which are simply not captured in this formalism, because we always restrict ourselves to look at symmetries of the Hamiltonian which are subgroups of the isometry group. Another facet of this problem is that the classification procedure becomes highly sensitive to even the slightest perturbations of the manifold. Indeed, even a small perturbation of a symmetric space makes it almost entirely asymmetric. This rigidity is not consistent with our physical intuition. 
Finally, we note that there could have been additional energetic or boundary restrictions on the order parameter space which can prevent the physical realisation of all classes of defects or may result into a splitting of the defect classes.

Some of the critiques presented here are not in exact agreement with the assumptions, presented in Section~\ref{sec:phyjust}, which justify the rationale behind the classification methodology for defects in crystals. The Definition~\ref{def:def} therefore should be used for crystal lattices with care and keeping in mind the assumptions outlined in the preceding section as well as the critique presented above. Many of our objections, and the attempt to clearly state the underlying assumptions, are motivated in order to rigorize and extend critical arguments present in earlier works \cite{Mermin79}, \cite{Finkelstein66}, \cite{MG80}, \cite{Balian1981PhysiqueDD}. 


\section{\centering Reconciliation}
\label{reconcile}
We discuss two different extensions of the theory developed so far in order to address some of the concerns which were raised above. What follows is, however, mostly a collection of ideas which would need to be made precise in further communications. 
\subsection{What are disclinations?}
The classification of line defects in crystals using homotopy theory is ambiguous in distinguishing between what are traditionally called dislocations and disclinations. The breaking of rotational symmetry, identified with disclinations, has a unique meaning for simple manifolds, such as Euclidean spaces and spheres, but for a more general manifold isometry group may not always be separable into rotations and translations. Therefore the consideration of defects separately as dislocations and disinclination breaks down. 
Instead, we propose to define disclinations in terms of rotation groups of a point, i.e., the little group of the point.
\subsubsection{Rotation group of a point}
For a Riemannian manifold $M$, with metric $g$, the group of isometries is a Lie group which acts smoothly on $M$ \cite{MS39}. We consider the little group of a point $p\in M$, i.e., the stabiliser
\begin{equation}
    \mathrm{Fix}\left(p\right)=_\text{def}\mathrm{Stab}_{\mathrm{Isom}\left(M\right)}\left(p\right)\leq\mathrm{Isom}\left(M\right).
\end{equation}
Clearly, $\mathrm{Fix}\left(p\right)$ is closed and, by Cartan's closed subgroup theorem~\ref{theo:cartan}, a Lie group with manifold structure, in agreement with its embedding into the isometry group.

Since any $f\in\mathrm{Fix}\left(p\right)$ is an isometry and $f\left(p\right)=p$, it induces a linear map
\begin{equation}
    \mathrm df\vert_p\colon T_pM\to\mathrm T_pM,
\end{equation}
which preserves the metric. Differentiating at $p$ gives us a homomorphism
\begin{equation}
    \label{eq:homFixOrth}
    \mathrm{Fix}\left(p\right)\to\mathrm O\left(T_pM,g_p\right),
\end{equation}
where, for an inner product space $\left(V,g\right)$ over $\mathbb R$, $O\left(V,g\right)$ is the orthogonal group consisting of linear maps $V\to V$ which preserve the inner product $g$. 
We now define the `rotation group' at a point to be the path component containing the identity of $\mathrm{Fix}\left(p\right)$, i.e.,
\begin{equation}
    \mathrm{Rot}\left(p\right)=\left(\mathrm{Fix}\left(p\right)\right)_0.
\end{equation}
We define $\mathrm{SO}\left(V,g\right)$ as the component of $\mathrm O\left(V,g\right)$ containing the identity. Then the homomorphism from Equation~\ref{eq:homFixOrth} descends into a homomorphism
\begin{equation}
    \theta_p\colon\mathrm{Rot}\left(p\right)\to\mathrm{SO}\left(T_pM,g_p\right).
\end{equation}
This homomorphism is injective. For $f\in\mathrm{Rot}\left(p\right)$, we call $\theta_p\left(f\right)$ its `angle'.
\subsubsection{Single disclination}
Let $\Gamma\leq\mathrm{Isom}\left(M\right)$ be a discrete subgroup such that the coset space $\mathrm{Isom}\left(M\right)/\Gamma$ is compact (recall that the quotient of a Lie group by a closed subgroup is a manifold). The space of lattices with $\Gamma$ as their symmetry group is $\mathrm{Isom}\left(M\right)/\Gamma$ (the order parameter space). According to the classification of topological defects, the set of topological defects, defective on some subset $X\subseteq M$, is given by homotopy classes of maps $M\setminus X$ into the order parameter space, i.e.,
\begin{equation}
    \mathrm{hTop}\left[M\setminus X,\mathrm{Isom}\left(M\right)/\Gamma\right].
\end{equation}
Consider the image of the rotation group of a point in the space of lattices, i.e., $q\left(\mathrm{Rot}_M\left(p\right)\right)$, where
\begin{equation}
    q\colon\mathrm{Isom}\left(M\right)\to\mathrm{Isom}\left(M\right)/\Gamma.
\end{equation}
A single disclination at a point $p$ is then identified with the subset of defects which are defective on $\left\{p\right\}$, given as
\begin{equation}
    \mathrm{hTop}\left[M\setminus\left\{p\right\},q\left(\mathrm{Rot}\left(p\right)\right)\right]\hookrightarrow\mathrm{hTop}\left[M\setminus\left\{p\right\},\mathrm{Isom}\left(M\right)/\Gamma\right].
\end{equation}
This definition, when specialised to $\mathbb R^n$, reduces to the classical notion of disclination, a defect which has a  nontrivial structure only in the rotational symmetry group. 
\subsubsection{Multiple disclinations}
We consider the special case when $M$ is homogeneous, i.e., the isometries act transitively. In this case $\mathrm{Rot}\left(p\right)$ is independent of $p$. We call this group $\mathrm{Rot}$. We define a disclination, defective at a finite number of points $p_1,\dots,p_n$, to be a map
\begin{equation}
    f\colon M\setminus\left\{p_1,\cdots,p_n\right\}\to q\left(\mathrm{Rot}\right).
\end{equation}
Consider an embedding $b\colon B^n\to M$, where $B^n$ is $n$-dimensional ball, such that $p_1\in b\left(\mathrm{int}\;B^n\right)$ and $p_2,\cdots,p_n\not\in b\left(B^n\right)$ and $g\vert_{\partial B^n}\simeq c_{p_1}$ (boundary of the ball is contractible). Composing with $f$, we get a map $S^{n-1}\to q\left(\mathrm{Rot}\right)$, which represents the single point defect around $p_1$, among multiple disclinations represented by $f$. 
The condition of homogeneity is, however, very restrictive. For instance, all compact closed oriented surfaces with genus $>1$ are not homogeneous. An extension of our ideas to non-homogeneous spaces is presently unclear.
    
\subsection{A sheaf of tilings}
    One of the limitations pointed out in Section~\ref{sec:critique} arise from the incompatibility between the local and global nature of symmetries of lattices. A sheaf theoretic formalism could be an appropriate way to tackle this with suitable notions of unique gluing of the local data.
    Let $M$ and $S$ be two metric spaces. For any $U\subseteq M$ open, we define a tiling of $U$ by $S$ to be a collection of isometric embeddings $f_i\colon T_i\to U$ (where $S$ and $T_i$ are simply connected) such that
    \begin{itemize}
        \item $T_i\subseteq S$,
        \item $T_i$ is not necessarily all of $S$ if and only if $\mathrm{im}\;f_i\cap\partial U\neq\emptyset$, i.e., incomplete tiles can only occur at the boundaries,
        \item$\bigcup_i\mathrm{im}\;f_i=U$, i.e., the tiling covers $U$, and
        \item The interiors of the image of $f_i,f_j$ do not intersect unless $i=j$.
    \end{itemize}
    For any open set $U$, we define
    \begin{equation*}
        \mathcal F_{S,M}U=\text{set of tilings of }U\text{ by }S.
    \end{equation*}
    For $V\subseteq U$ and a tiling $f_i\colon T_i\to U$, introduce a tiling on $V$ by defining $T_i'=f_i^{-1}\left(\mathrm{im}\;f_i\cap V\right)$ and $f_i'\colon T_i'\to V$ by $f_i'=f_i\vert_{T_i'}$. We still need to verify whether this restriction map will make $\mathcal F$ a presheaf. Once verified, we believe that we can show that it is actually a sheaf (with a slightly modified definition). However, these are perfect tilings. If we can define a sheaf of deformed lattices, then we can understand defects as obstructions to extension of a tiling on some open subset (a section of this `sheaf of deformed tilings' on the open set) to a tiling on the entire space (a global section). This obstruction can be computed using some sort of `cohomology'. However, it is unclear at this stage, how one should define such a sheaf of deformed tilings, and further, if defined, as cohomology theories are usually studied for sheaves that take values in abelian categories (such as vector spaces), it requires further work to define the right sort of `cohomology' for the obstruction to extension to global sections of this set-valued sheaf of tilings.

\section*{\centering Acknowledgements}
AG acknowledges the financial support from SERB (DST) Grant No. MTR/2019/000578. We also acknowledge several insightful discussions with Mr. Animesh Pandey.

    \bibliographystyle{alpha}
    \bibliography{ref}
    \appendix
    \label{app:rn}
    
\section{Homotopy Theory}
The proofs for the following results can be found in \cite{May99,MP12}. 
\begin{theorem}
    \label{theo:pi0prescoprod}
    $\pi_0$ preserves coproducts.
\end{theorem}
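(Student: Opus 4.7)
The plan is to verify the statement directly from the definitions, since both coproducts (in $\mathrm{Top}$ and in $\mathrm{Set}$) are realised as disjoint unions. So for a family $\{X_i\}_{i\in I}$ of topological spaces I want to exhibit a natural bijection
\begin{equation*}
    \pi_0\!\left(\coprod_{i\in I}X_i\right)\;\cong\;\coprod_{i\in I}\pi_0(X_i).
\end{equation*}

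First I would set up the obvious map. For each $j\in I$ the canonical inclusion $\iota_j\colon X_j\hookrightarrow\coprod_{i}X_i$ is continuous and therefore induces $\pi_0(\iota_j)\colon\pi_0(X_j)\to\pi_0(\coprod_{i}X_i)$. By the universal property of the coproduct in $\mathrm{Set}$, these assemble into a single map
\begin{equation*}
    \Phi\colon\coprod_{i\in I}\pi_0(X_i)\longrightarrow\pi_0\!\left(\coprod_{i\in I}X_i\right),
\end{equation*}
sending a path component $[x]\in\pi_0(X_j)$ to the path component of $\iota_j(x)$ in the disjoint union. The task is to show $\Phi$ is a bijection.

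The main step is the following topological observation, which I would prove first: for any continuous path $\gamma\colon[0,1]\to\coprod_{i}X_i$, there exists a unique $i$ with $\mathrm{im}\,\gamma\subseteq\iota_i(X_i)$. Indeed, each $\iota_i(X_i)$ is both open and closed in the disjoint union topology, so each $\gamma^{-1}(\iota_i(X_i))$ is clopen in the connected space $[0,1]$, and hence either empty or the whole interval; since the sets $\gamma^{-1}(\iota_i(X_i))$ partition $[0,1]$, exactly one is the whole interval. This is the only genuinely content-bearing step, and it is the routine ``clopen-in-connected'' argument; I do not expect any real obstacle here.

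From this lemma the two halves of bijectivity follow immediately. For surjectivity, any point of $\coprod_{i}X_i$ lies in some $\iota_j(X_j)$, so its path component is the image under $\Phi$ of its class in $\pi_0(X_j)$. For injectivity, suppose $[x_j]\in\pi_0(X_j)$ and $[x_k]\in\pi_0(X_k)$ have the same image under $\Phi$, so there is a path in $\coprod_{i}X_i$ from $\iota_j(x_j)$ to $\iota_k(x_k)$. By the lemma the path is confined to a single summand, forcing $j=k$; restricting the path (which by construction factors through $\iota_j$) to a path in $X_j$ then shows $[x_j]=[x_k]$ in $\pi_0(X_j)$. Naturality in $I$ is automatic from the construction of $\Phi$ via the universal property.
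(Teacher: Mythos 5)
Your proof is correct. The paper gives no proof of this statement, deferring to \cite{May99,MP12}; your argument --- the canonical map induced by the inclusions, together with the observation that a path in a disjoint union must land in a single clopen summand since $[0,1]$ is connected --- is exactly the standard verification found there.
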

\begin{theorem}
    \label{theo:wedgeiscoprod}
    Wedge is coproduct in $\mathrm{hTop}_*$.
\end{theorem}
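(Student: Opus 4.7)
The plan is to upgrade the standard universal property of the wedge from the pointed category $\mathrm{Top}_*$ to the homotopy category $\mathrm{hTop}_*$. Fix pointed spaces $(X,x_0),(Y,y_0),(Z,z_0)$ together with the canonical inclusions $i_X\colon X\hookrightarrow X\vee Y$ and $i_Y\colon Y\hookrightarrow X\vee Y$. First I would recall that $(X\vee Y,i_X,i_Y)$ is already a coproduct in $\mathrm{Top}_*$: given pointed maps $f\colon X\to Z$ and $g\colon Y\to Z$, the combined map $f\sqcup g\colon X\sqcup Y\to Z$ descends through the identification $x_0\sim y_0$ because both sides map to $z_0$, and uniqueness follows from the surjectivity of $X\sqcup Y\twoheadrightarrow X\vee Y$. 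This I would only sketch since it is standard.

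Next I would consider the natural map
\begin{equation*}
    \Phi\colon\mathrm{Top}_*(X\vee Y,Z)\longrightarrow\mathrm{Top}_*(X,Z)\times\mathrm{Top}_*(Y,Z),\qquad h\mapsto(h\circ i_X,\,h\circ i_Y),
\end{equation*}
which the previous step shows is a bijection, and argue that it descends to a bijection $\bar\Phi$ on pointed-homotopy classes. Well-definedness of $\bar\Phi$ is immediate: a pointed homotopy $H\colon(X\vee Y)\times I\to Z$ restricts along $i_X\times\mathrm{id}_I$ and $i_Y\times\mathrm{id}_I$ to pointed homotopies of the components. Surjectivity of $\bar\Phi$ is equally immediate: given classes $[f],[g]$, pick representatives and apply Step~1.

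The substantive content lies in injectivity of $\bar\Phi$. Given $h,h'\colon X\vee Y\to Z$ and pointed homotopies $H_X\colon h\circ i_X\simeq_* h'\circ i_X$ and $H_Y\colon h\circ i_Y\simeq_* h'\circ i_Y$, I would combine them into a single pointed homotopy $h\simeq_* h'$. The key observation is that, since both $H_X$ and $H_Y$ are pointed, their restrictions $H_X|_{\{x_0\}\times I}$ and $H_Y|_{\{y_0\}\times I}$ are each the constant map at $z_0$; so the two homotopies agree on the common basepoint slice and assemble into a continuous pointed map from $(X\vee Y)\times I$ to $Z$, which witnesses $h\simeq_* h'$.

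The main obstacle is the point-set identification $(X\vee Y)\times I\cong(X\times I)\cup_{\{*\}\times I}(Y\times I)$, that is, checking that $-\times I$ commutes with the pushout defining the wedge. This holds automatically because $I$ is locally compact Hausdorff, so $-\times I$ is a left adjoint and preserves colimits; alternatively, one bypasses general category theory and invokes directly the universal property of the quotient topology on $(X\sqcup Y)\times I/(x_0,t)\sim(y_0,t)$. Modulo this routine point-set verification, the argument is formal, and the coproduct property in $\mathrm{hTop}_*$ reduces to the statement that pointed homotopies out of a wedge are exactly compatible pairs of pointed homotopies of the restrictions.
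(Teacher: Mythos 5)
Your proof is correct. The paper does not actually prove this statement: Theorem~\ref{theo:wedgeiscoprod} is stated in the appendix with the proof deferred to the cited references \cite{May99,MP12}, so there is no in-paper argument to compare against. Your argument is the standard one and is complete: the only genuinely non-formal point is injectivity of $\bar\Phi$, and you correctly identify that the two pointed homotopies agree (both constantly $z_0$) on the basepoint slice and glue because $-\times I$ preserves the pushout defining the wedge ($I$ being locally compact Hausdorff). Note also that no well-pointedness hypothesis is needed here, in contrast to Theorem~\ref{theo:relbtwbasdunbshtpy}; and the same argument applies verbatim to arbitrary wedges $\bigvee_i X_i$, which is the form actually used in the paper for $\bigvee_{i=1}^k S^{n_i}$.
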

\begin{theorem}
    \label{theo:relbtwbasdunbshtpy}
    If $Y$ is path-connected and $x\in X,y\in Y$ are well-pointed basepoints, then
    \begin{equation}
        \mathrm{hTop}\left[X,Y\right]\cong\mathrm{hTop}_*\left[\left(X,x\right),\left(Y,y\right)\right]/\pi_1\left(Y,y\right).
    \end{equation}
\end{theorem}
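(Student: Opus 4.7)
The plan is to analyze the forgetful map from pointed to unpointed homotopy classes and identify its fibers with orbits of the $\pi_1$-action.

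First, I would define
\[
    \Phi\colon\mathrm{hTop}_*\left[\left(X,x\right),\left(Y,y\right)\right]\to\mathrm{hTop}\left[X,Y\right]
\]
by forgetting the basepoint condition; this is well-defined since a pointed homotopy is in particular a free homotopy. The statement then reduces to showing that $\Phi$ is surjective and that its fibers are exactly the orbits of a natural $\pi_1\left(Y,y\right)$-action on the pointed homotopy set.

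For surjectivity, take any continuous $f\colon X\to Y$. Since $Y$ is path-connected, pick a path $\gamma\colon I\to Y$ from $f\left(x\right)$ to $y$. Because $x\in X$ is well-pointed, the inclusion $\left\{x\right\}\hookrightarrow X$ is a cofibration, so it satisfies the homotopy extension property (HEP). Applying HEP to the partial homotopy $\gamma$ on $\left\{x\right\}$ with initial data $f$, we extend to $H\colon X\times I\to Y$ with $H\left(-,0\right)=f$; then $f':=H\left(-,1\right)$ satisfies $f'\left(x\right)=y$ and $f\simeq f'$, so $\left[f\right]=\Phi\left(\left[f'\right]\right)$.

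Next, I would define the $\pi_1\left(Y,y\right)$-action: given $\left[\gamma\right]\in\pi_1\left(Y,y\right)$ and a pointed map $f$, apply HEP (again using the cofibration $\left\{x\right\}\hookrightarrow X$) to the homotopy $t\mapsto\gamma\left(t\right)$ on $\left\{x\right\}$ to obtain $H\colon X\times I\to Y$ with $H\left(-,0\right)=f$, and set $\left[\gamma\right]\cdot\left[f\right]:=\left[H\left(-,1\right)\right]_*$. Well-definedness of this action on pointed homotopy classes is the standard verification that HEP-extensions are unique up to homotopy rel $\left\{x\right\}\times I$; here well-pointedness of both $x\in X$ and $y\in Y$ ensures that the usual Strøm-style arguments apply, so different choices of the extension $H$, and pointed-homotopic representatives of $f$ or path-homotopic representatives of $\gamma$, all yield the same pointed class.

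Finally, I would show the fibers of $\Phi$ are exactly these orbits. If $\left[\gamma\right]\cdot\left[f\right]=\left[g\right]$, the associated HEP extension is already a free homotopy from $f$ to a representative of $\left[g\right]$, so $\Phi\left(\left[f\right]\right)=\Phi\left(\left[g\right]\right)$. Conversely, suppose $\Phi\left(\left[f\right]\right)=\Phi\left(\left[g\right]\right)$ via a free homotopy $H\colon X\times I\to Y$ between pointed maps $f,g$. The track $\gamma\left(t\right):=H\left(x,t\right)$ is a loop at $y$, and by construction $H$ realises the HEP-extension defining $\left[\gamma\right]\cdot\left[f\right]$ (up to the ambiguity already quotiented out by pointed homotopy), so $\left[\gamma\right]\cdot\left[f\right]=\left[g\right]$. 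This identifies $\Phi$ with the quotient map onto orbits, giving the bijection.

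The main obstacle will be the well-definedness of the $\pi_1$-action and the matching of free homotopies to HEP-extensions; both rely on the subtle point that although the extension guaranteed by HEP is not unique as a map, its endpoint is unique up to pointed homotopy precisely because the inclusions $\left\{x\right\}\hookrightarrow X$ and $\left\{y\right\}\hookrightarrow Y$ are cofibrations. Once this technical bookkeeping is handled, the rest is bijection chasing.
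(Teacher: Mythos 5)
Your proof is correct and is the standard track-of-the-basepoint argument; the paper does not prove this appendix result itself but defers to the cited references (May; May--Ponto), where essentially this HEP/orbit argument appears. The one technical point you rightly flag, well-definedness of the $\pi_1\left(Y,y\right)$-action, only requires the cofibration $\left\{x\right\}\hookrightarrow X$ (i.e., well-pointedness of $x$) together with path-connectedness of $Y$, so the hypothesis that $y\in Y$ is well-pointed is not actually needed in your argument.
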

\begin{theorem}
    \label{theo;pinquotfreehom}
    If $X$ is path-connected and $y\in Y$ is a well-pointed basepoint, then
    \begin{equation}
        \mathrm{hTop}\left[S^n,Y\right]\cong\pi_n\left(Y,y\right)/\pi_1\left(Y,y\right).
    \end{equation}
\end{theorem}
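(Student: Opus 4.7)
The plan is to reduce the statement to Theorem~\ref{theo:relbtwbasdunbshtpy} by choosing $X = S^n$ with its standard basepoint $s_0$. Note that the hypothesis as written, ``$X$ is path-connected,'' is a typo: $X$ does not appear in the conclusion, and the intended hypothesis for invoking Theorem~\ref{theo:relbtwbasdunbshtpy} is that $Y$ is path-connected with $y$ a well-pointed basepoint (and $S^n$ is path-connected with $s_0$ well-pointed, automatically). I will simply note this and apply the earlier result.

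First I would observe that the sphere $S^n$ is path-connected and that its standard basepoint $s_0 \in S^n$ is a well-pointed basepoint (the inclusion $\{s_0\} \hookrightarrow S^n$ is a cofibration, e.g.\ because $(S^n, s_0)$ is a CW pair). Hence the hypotheses of Theorem~\ref{theo:relbtwbasdunbshtpy} are satisfied with $(X,x) = (S^n, s_0)$ and $(Y,y)$. Applying that theorem yields
\begin{equation*}
    \mathrm{hTop}\left[S^n, Y\right] \cong \mathrm{hTop}_*\left[(S^n, s_0), (Y, y)\right] / \pi_1(Y, y).
\end{equation*}

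Next I would invoke the definition of the higher homotopy groups: by definition $\pi_n(Y, y) = \mathrm{hTop}_*\left[(S^n, s_0), (Y, y)\right]$, the set of based homotopy classes of based maps from the $n$-sphere into $Y$. Substituting this identification into the previous display gives
\begin{equation*}
    \mathrm{hTop}\left[S^n, Y\right] \cong \pi_n(Y, y) / \pi_1(Y, y),
\end{equation*}
as desired, where the quotient is taken with respect to the standard action of $\pi_1(Y,y)$ on pointed mapping spaces out of a based space (which on $\pi_n(Y,y)$ is the usual $\pi_1$-action on $\pi_n$).

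There is no real obstacle here: the entire content has been packaged into Theorem~\ref{theo:relbtwbasdunbshtpy} and the definition of $\pi_n$. The only point worth flagging for the reader is that the action by $\pi_1(Y,y)$ on $\pi_n(Y,y)$ appearing in this quotient agrees with the classical action by change of basepoint along loops, so that the notation $\pi_n(Y,y)/\pi_1(Y,y)$ is unambiguous. If one wished to avoid relying on the earlier theorem, the alternative would be a direct argument: send a free homotopy class $[f\colon S^n \to Y]$ to a based representative by choosing a path from $f(s_0)$ to $y$ (possible by path-connectedness of $Y$), and then check that two such choices differ by the $\pi_1(Y,y)$-action, which is precisely the content of the standard free-versus-based homotopy correspondence.
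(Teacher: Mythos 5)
Your proof is correct and is exactly the intended derivation: the paper supplies no inline argument for this theorem (it defers to \cite{May99,MP12}), and the statement is precisely the specialization of Theorem~\ref{theo:relbtwbasdunbshtpy} to $(X,x)=\left(S^n,s_0\right)$ combined with the definitional identification $\pi_n\left(Y,y\right)=\mathrm{hTop}_*\left[\left(S^n,s_0\right),\left(Y,y\right)\right]$, with the quotient taken by the standard $\pi_1$-action, just as you say. Your flag that the hypothesis ``$X$ is path-connected'' is a typo for ``$Y$ is path-connected'' is also right: $X$ never appears in the conclusion, and path-connectedness of $Y$ is what Theorem~\ref{theo:relbtwbasdunbshtpy} (and the free-versus-based correspondence it encodes) actually requires.
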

\begin{theorem}
    \label{theo:pi1selfocnj}
    The action of $\pi_1$ on itself is by conjugation.
\end{theorem}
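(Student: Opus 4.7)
The plan is to unpack the definition of the standard $\pi_1$-action on pointed mapping spaces and specialize it to the case of the action on $\pi_1$ itself, then verify by an explicit homotopy that this action is conjugation. Recall that for a well-pointed space $(Y,y)$, the action of $\pi_1(Y,y)$ on $\pi_n(Y,y)$ is constructed via the homotopy extension property of the cofibration $\{*\}\hookrightarrow S^n$. Given a loop $\gamma\colon I\to Y$ representing $[\gamma]\in\pi_1(Y,y)$ and a pointed map $f\colon(S^n,*)\to(Y,y)$ representing $[f]\in\pi_n(Y,y)$, one extends the partial homotopy $\{*\}\times I\to Y$, $(\ast,t)\mapsto\gamma(t)$, together with $f\colon S^n\times\{0\}\to Y$, to a homotopy $H\colon S^n\times I\to Y$, and defines $[\gamma]\cdot[f]\coloneqq[H(-,1)]$.

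First I would specialize to $n=1$, modeling $S^1$ as $I/\partial I$ and representing an element of $\pi_1(Y,y)$ by a loop $\alpha\colon I\to Y$ with $\alpha(0)=\alpha(1)=y$. Next I would write down an explicit homotopy $H\colon S^1\times I\to Y$ doing the job: viewing $S^1\times I$ as an annulus, the inner boundary carries $\alpha$, the radial segment at the basepoint carries the loop $\gamma$, and the outer boundary is obtained by running $\gamma$ out to the new basepoint, then $\alpha$, then $\gamma^{-1}$ back. A convenient formula is to set $H(s,t)$ to agree with $\gamma(t(1-3s))$ on $[0,\tfrac13]$ (reparametrized), with $\alpha(3s-1)$ on $[\tfrac13,\tfrac23]$, and with $\gamma(t(3s-2))$ on $[\tfrac23,1]$, extended continuously; at $t=0$ this collapses to $\alpha$, and at $t=1$ one reads off the concatenation $\gamma\cdot\alpha\cdot\gamma^{-1}$.

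Finally I would check that the construction respects homotopies of $\gamma$ and $\alpha$ rel endpoints (so the assignment descends to a well-defined map $\pi_1\times\pi_1\to\pi_1$) and that this map is a group action, matching the conjugation action $[\gamma]\cdot[\alpha]=[\gamma\alpha\gamma^{-1}]$. The main obstacle is purely bookkeeping: choosing parametrizations so that the explicit homotopy $H$ is continuous and its restriction to the outer circle is literally (not just up to reparametrization) a representative of the conjugate, and fixing the left/right convention so that the induced action agrees with the one used in Theorem~\ref{theo;pinquotfreehom}. Once the explicit $H$ is written down cleanly, the conclusion is immediate.
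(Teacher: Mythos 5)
Your argument is correct and is precisely the standard proof of this fact; the paper itself offers no proof, deferring to \cite{May99,MP12}, and the annulus homotopy you describe is exactly the argument found there. The only wrinkle is that your explicit formula restricts at $t=0$ to a reparametrization of $\alpha$ and at $t=1$ reads off $\gamma^{-1}\cdot\alpha\cdot\gamma$ rather than $\gamma\cdot\alpha\cdot\gamma^{-1}$, but both points fall under the convention/bookkeeping issues you already flag and neither affects the conclusion that the action is by conjugation.
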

\begin{theorem}
    \label{theo:breakintopathcomp}
    For $X,Y$ locally path-connected, we have
    \begin{equation}
        \mathrm{hTop}\left[X,Y\right]=\prod_{A\in\pi_0\left(X\right)}\coprod_{B\in\pi_0\left(Y\right)}\mathrm{hTop}\left[A,B\right].
    \end{equation}
\end{theorem}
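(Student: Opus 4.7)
The plan is to reduce the problem in two stages: first use local path-connectedness of $X$ to split the source into its path components, then split maps out of each such component according to which path component of $Y$ they land in.

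For the first stage, I would invoke local path-connectedness of $X$ to conclude that its path components $A \in \pi_0(X)$ are open (and hence also closed, as each is the complement of the union of the remaining open components), so that $X$ is the topological coproduct $\coprod_{A \in \pi_0(X)} A$ in $\mathrm{Top}$. The universal property of the coproduct then gives $\mathrm{Top}[X,Y] \cong \prod_A \mathrm{Top}[A,Y]$, and since a homotopy $H \colon X \times I \to Y$ restricts componentwise to $H|_{A \times I}$ and any such family glues back uniquely (because $X \times I \cong \coprod_A (A \times I)$), this equivalence descends to $\mathrm{hTop}[X,Y] \cong \prod_A \mathrm{hTop}[A,Y]$.

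For the second stage, I would fix a single path component $A$ and observe that any continuous $f \colon A \to Y$ has path-connected image (continuous images of path-connected spaces are path-connected), so $f(A)$ lies in a unique $B \in \pi_0(Y)$. This partitions $\mathrm{Top}[A,Y] \cong \coprod_B \mathrm{Top}[A,B]$ as a disjoint union indexed by the target component. Applying the same reasoning to a homotopy $H \colon A \times I \to Y$, whose source is again path-connected, shows that $H$ itself factors through a single $B$; consequently homotopic maps share their target component and the homotopy may be regarded as taking place inside $B$, giving $\mathrm{hTop}[A,Y] \cong \coprod_B \mathrm{hTop}[A,B]$. Assembling the two stages yields the stated isomorphism.

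The only genuinely delicate step is openness of the path components of $X$: this is exactly where local path-connectedness enters, and without it a componentwise family of continuous maps $f_A \colon A \to Y$ need not reassemble into a continuous map out of $X$, so the coproduct decomposition would fail. Local path-connectedness of $Y$ is, by contrast, not logically required for the target-side argument — that step uses only the elementary fact that a path-connected image lies in a single path component — so its appearance in the hypotheses is a convenience of uniform phrasing rather than a necessity of the proof.
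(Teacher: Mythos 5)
Your proof is correct and follows the same two-stage skeleton as the paper's proof: decompose $X$ as the topological coproduct of its path components (open by local path-connectedness) to get $\mathrm{hTop}\left[X,Y\right]\cong\prod_{A}\mathrm{hTop}\left[A,Y\right]$, then sort maps out of each $A$ by the component of $Y$ they hit. The genuine difference is on the target side. The paper also decomposes $Y$ as the coproduct of its path components (its step $\underset{3}{=}$, which is exactly where local path-connectedness of $Y$ enters) and then factors maps from $A$ through a single summand by connectedness of the image; you instead argue directly that the image of the path-connected space $A$ (and, crucially, of $A\times I$) lies in a unique path component $B$ and corestricts continuously to it. Your route is mildly more general: it establishes $\mathrm{hTop}\left[A,Y\right]\cong\coprod_{B}\mathrm{hTop}\left[A,B\right]$ (with $B$ carrying the subspace topology) with no hypothesis on $Y$ at all, and your closing remark that local path-connectedness of $Y$ is a convenience of uniform phrasing rather than a logical necessity is accurate. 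Your phrasing via path-connectedness is also slightly cleaner than the paper's appeal to ``the image of a connected space must be connected,'' which suffices there only because in a locally path-connected space components and path components coincide and are open. Finally, you make explicit the homotopy-level bookkeeping --- gluing homotopies along $X\times I\cong\coprod_{A}\left(A\times I\right)$ and checking that a homotopy out of $A\times I$ stays in one component --- which the paper compresses into the labels ``universal property of colimits'' and the connectedness remark; this fills in real, if routine, detail rather than changing the argument.
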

\begin{proof}
    We compute
    \begin{equation}
        \begin{split}
            \mathrm{hTop}\left[X,Y\right]\underset{1}{=}&\mathrm{hTop}\left[\coprod_{A\in\pi_0\left(X\right)}A,Y\right]\underset{2}{\cong}\prod_{A\in\pi_0\left(M\right)}\mathrm{hTop}\left[A,Y\right]\\
            \underset{3}{=}&\prod_{A\in\pi_0\left(X\right)}\mathrm{hTop}\left[A,\coprod_{B\in\pi_0\left(Y\right)}B\right]\underset{4}{\cong}\prod_{A\in\pi_0\left(X\right)}\coprod_{B\in\pi_0\left(Y\right)}\mathrm{hTop}\left[A,B\right],
        \end{split}
    \end{equation}
    where
    \begin{itemize}
        \item[$\underset{1}{=}$]$X$ is the coproduct of its path-connected components as it is locally path-connected,
        \item[$\underset{2}{\cong}$]Universal property of colimits,
        \item[$\underset{3}{=}$]$Y$ is the coproduct of its path-connected components as it is locally path-connected, and
        \item[$\underset{4}{\cong}$]The image of a connected space must be connected.
    \end{itemize}
\end{proof}
\begin{theorem}
    \label{theo:lesfunda}
    Let $G$ be a path-connected topological group and $H\leq G$. Then
    \begin{equation}
        \pi_n\left(H\right)\cong\pi_{n-1}\left(H\right)\cong0\implies\pi_n\left(G/H\right)\cong\pi_n\left(G\right).
    \end{equation}
\end{theorem}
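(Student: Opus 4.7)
The plan is to invoke the long exact sequence of homotopy groups associated to the fibration $H \to G \to G/H$, and then use the vanishing hypotheses to extract the desired isomorphism by exactness. This is a completely standard argument once the fibration is in hand, so the technical content is really just ensuring that the quotient map $G \to G/H$ is a fibration (or fiber bundle) with fiber $H$.

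First I would establish that, under reasonable point-set hypotheses (e.g.\ $H$ closed in $G$, so that $G/H$ is Hausdorff and the quotient map admits local sections), the projection $\pi\colon G \to G/H$ is a principal $H$-bundle and in particular a Serre fibration. This gives a long exact sequence
\begin{equation}
    \cdots \to \pi_n(H) \to \pi_n(G) \to \pi_n(G/H) \to \pi_{n-1}(H) \to \pi_{n-1}(G) \to \cdots
\end{equation}
of homotopy groups, where basepoints are taken at the identity coset. Path-connectedness of $G$ ensures the sequence is well-defined and exact at the relevant terms without worrying about choices of basepoint in the fiber.

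Next I would substitute the hypotheses $\pi_n(H) \cong 0$ and $\pi_{n-1}(H) \cong 0$ into the relevant segment:
\begin{equation}
    0 \;=\; \pi_n(H) \;\longrightarrow\; \pi_n(G) \;\xrightarrow{\;\pi_*\;}\; \pi_n(G/H) \;\longrightarrow\; \pi_{n-1}(H) \;=\; 0.
\end{equation}
Exactness at $\pi_n(G)$ makes $\pi_*$ injective, and exactness at $\pi_n(G/H)$ makes $\pi_*$ surjective, so $\pi_*\colon \pi_n(G) \to \pi_n(G/H)$ is an isomorphism, which is precisely the conclusion.

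The only genuinely delicate point is the first step, namely that $G \to G/H$ really is a fibration in sufficient generality; in the topological-group setting this may fail without additional hypotheses on $H$ (closedness, or some local-triviality condition). In the applications made in the body of the paper, $G$ is always a Lie group and $H$ is a discrete (in particular closed) subgroup, so $G \to G/H$ is a covering map and hence a fiber bundle, and this concern evaporates. I would therefore either state the hypothesis that $H$ is closed explicitly, or cite the standard result (e.g.\ from \cite{May99,MP12}) that for a closed subgroup of a topological group satisfying mild conditions the quotient map is a fibration, and then run the exact-sequence argument above.
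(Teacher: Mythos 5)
Your proof is correct and is exactly the standard argument: the paper gives no proof of this statement, only a citation to \cite{May99,MP12}, where the long exact sequence of the fibration $H\to G\to G/H$ is precisely the intended route. Your caveat about needing $H$ closed (or local sections) for the quotient map to be a fibration is a fair observation about a hypothesis the theorem statement omits, and you correctly note it is harmless in the paper's applications, where $H$ is discrete and the quotient is a covering map.
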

\begin{theorem}
    \label{theo:univcovpi1}
    Let $G$ be a topological group and $q\colon\tilde G\to G$ be a universal covering map. Then
    \begin{equation}
        \pi_1\left(G/H\right)\cong q^{-1}\left(H\right).
    \end{equation}
\end{theorem}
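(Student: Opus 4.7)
The plan is to identify the composition $\tilde G\xrightarrow{q}G\xrightarrow{\pi_H}G/H$, where $\pi_H$ is the quotient map, as the universal cover of $G/H$, and then read $\pi_1(G/H)$ off from its deck transformation group. Throughout I would assume the hypothesis tacitly in force in the paper's applications, namely that $H$ is discrete in $G$ (as for the stabiliser $G_v$ of a crystalline vacuum), so that $G/H$ is well-behaved as the base of a covering map.

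First I would check that $\pi_H\colon G\to G/H$ is itself a covering map: left multiplication by $H$ on $G$ is free, and discreteness together with continuity of the multiplication makes this action properly discontinuous, so $\pi_H$ is a principal $H$-bundle. Composing with $q$ then produces a map $\pi_H\circ q\colon\tilde G\to G/H$ with discrete fibers, and local triviality descends from the two intermediate covers, so the composition is a covering map. Since $\tilde G$ is simply connected, $\pi_H\circ q$ must be the universal cover of $G/H$.

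Next I would invoke the Galois correspondence for universal covers: $\pi_1(G/H)$ is canonically isomorphic to the group $\mathrm{Deck}(\pi_H\circ q)$ of deck transformations. To identify this deck group with $q^{-1}(H)$ as a group, I would first lift the topological group structure from $G$ to $\tilde G$ in the standard way (using the unique lift of $G\times G\to G$ based at the identity, available because $\tilde G\times\tilde G$ is simply connected), thereby turning $q$ into a continuous homomorphism and making $q^{-1}(H)$ an honest subgroup of $\tilde G$. Left multiplication by elements of $q^{-1}(H)$ then defines a homomorphism $q^{-1}(H)\to\mathrm{Deck}(\pi_H\circ q)$; this map is injective (left translations on a topological group are free) and surjective, since the fiber of $\pi_H\circ q$ over $eH$ is precisely $q^{-1}(H)$ and deck transformations of a universal cover act freely and transitively on every fiber.

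The main obstacle is the surjectivity step in the last identification. The cleanest resolution is the general lifting lemma: a deck transformation is uniquely determined by its value at a single point, so since left translations by $q^{-1}(H)$ already realise every possible image of $e\in\tilde G$ within its fiber over $eH$, they must exhaust the entire deck group. All remaining ingredients — the covering property of $\pi_H$, composition of coverings, uniqueness of lifts, and existence of the topological group structure on $\tilde G$ — are standard facts from the covering space theory cited via \cite{May99,MP12}.
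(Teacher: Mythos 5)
The paper itself offers no proof of Theorem~\ref{theo:univcovpi1}; it is stated in the appendix with a blanket citation to \cite{May99,MP12}. Your argument is the standard covering-space proof of this fact, and its architecture is sound: for discrete $H$ the projection $G\to G/H$ is a covering, the composite $\tilde G\to G\to G/H$ is a simply connected covering and hence universal, and $\pi_1\left(G/H\right)$ is then the deck group, which is identified with the fibre $q^{-1}\left(H\right)$ over the basepoint once the group structure is lifted to $\tilde G$. You are also right to make explicit the discreteness of $H$, which the paper leaves tacit but which is genuinely needed (for $H=G=\mathbb R$ the statement fails), and your remark that a composite of coverings is again a covering is legitimate here because $G/H$ is locally homeomorphic to $G$ and therefore semi-locally simply connected.

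There is, however, one concrete error in the execution: you use \emph{left} multiplication throughout, and left multiplication by $H$ (or by $q^{-1}\left(H\right)$) does not preserve the fibres of $G\to G/H$. A deck transformation $\phi$ of $\pi_H$ must satisfy $\phi\left(g\right)H=gH$, i.e.\ $\phi\left(g\right)=gh$ for some $h\in H$; left translation $g\mapsto h_0g$ sends the coset $gH$ to $h_0gH$, which differs from $gH$ unless $H$ is normal. The same applies one level up: the deck transformations of $\tilde G\to G/H$ are the \emph{right} translations by elements of $q^{-1}\left(H\right)$. This is a fixable slip rather than a fatal gap — replacing left by right translations everywhere, the assignment $\tilde h\mapsto R_{\tilde h}$ is an anti-homomorphism onto the deck group, which still yields $\pi_1\left(G/H\right)\cong q^{-1}\left(H\right)^{\mathrm{op}}\cong q^{-1}\left(H\right)$ via inversion — but as written the key homomorphism in your surjectivity step does not land in the deck group at all. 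With that correction the proof is complete and matches what the cited references would supply.
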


\section{Topological Groups}
The proofs for the following results can be found in \cite{May99,MP12}, unless stated otherwise.
\begin{theorem}
    \label{theo:properorbstab}
    If $G$ acts properly on $X$ then $X/G$ is Hausdorff. In particular, each orbit is closed. The stabilizer of each point $x$ is compact and the map $G/\mathrm{Stab}_G\left(x\right)\to\mathrm{Orb}_G\left(x\right)$ is a homeomorphism. Moreover, if $G$ is Hausdorff then so is $X$.
\end{theorem}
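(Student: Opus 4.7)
The plan is to route every assertion through the single map
$\theta\colon G\times X\to X\times X$, $(g,x)\mapsto(gx,x)$, which by definition of a proper action is a proper map; since proper maps (in the ambient Hausdorff setting we care about) are closed, $\theta$ is a closed map. All four conclusions will be obtained by feeding $\theta$ appropriate inputs.

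For Hausdorffness of $X/G$ and closedness of orbits: the orbit equivalence relation $R\subseteq X\times X$ is exactly the image of $\theta$, so $R$ is closed. The quotient map $\pi\colon X\to X/G$ is open, because $\pi^{-1}(\pi(U))=\bigcup_{g\in G}gU$ is open whenever $U$ is. Closedness of $R$ together with openness of $\pi$ is the standard criterion for Hausdorffness of a quotient: two distinct cosets lift to a point $(x,y)\notin R$, hence sit inside a basic open rectangle $U\times V$ disjoint from $R$, and $\pi(U),\pi(V)$ are then disjoint open neighbourhoods separating the cosets. Each orbit is then closed as the preimage of a point under $\pi$ into a $T_1$ space. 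For the stabiliser, the identification $\mathrm{Stab}_G(x)\times\{x\}=\theta^{-1}(\{(x,x)\})$ realises it as the preimage of a compact singleton under the proper map $\theta$, hence compact.

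Next, the orbit–stabiliser map $\varphi\colon G/\mathrm{Stab}_G(x)\to \mathrm{Orb}_G(x)$ is always a continuous bijection; the real content is to upgrade it to a homeomorphism. My approach is to show that the orbit map $\rho_x\colon G\to X$, $g\mapsto gx$, is closed onto $\mathrm{Orb}_G(x)$; passing to the quotient by $\mathrm{Stab}_G(x)$ then renders $\varphi$ a closed continuous bijection, hence a homeomorphism. Concretely, $\rho_x$ factors as $G\to G\times X\xrightarrow{\theta} X\times X\xrightarrow{\mathrm{pr}_1}X$, where the first arrow is $g\mapsto(g,x)$; using that $\theta$ is closed and that $\{x\}$ is closed in $X$ (once Hausdorffness of $X$ is in hand, or under a mild $T_1$ hypothesis), one propagates the closedness of $\theta$ through to closedness of $\rho_x$ restricted to any saturated closed set. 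This is the main technical step, and the one that genuinely uses properness of the action: without it, the continuous bijection $\varphi$ may fail to be open. The bookkeeping obstacle is simply ensuring at each stage that the relevant ``closed'' assertions live in a space where closedness is meaningful (i.e.\ Hausdorff or at least $T_1$).

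Finally, for Hausdorffness of $X$ under the assumption that $G$ is Hausdorff: the singleton $\{e\}\subseteq G$ is closed, hence $\{e\}\times X$ is closed in $G\times X$, and its image under the closed map $\theta$ is exactly the diagonal $\Delta_X\subseteq X\times X$, which is therefore closed. Hence $X$ is Hausdorff.
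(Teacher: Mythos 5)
Your argument is sound and is essentially the standard proof of this result; note that the paper itself does not prove Theorem~\ref{theo:properorbstab} internally but only points to an external reference, so there is no in-text proof to compare against. Two places in your write-up deserve tightening. First, the chain $G\to G\times X\xrightarrow{\theta}X\times X\xrightarrow{\mathrm{pr}_1}X$ cannot be read as a composition of closed maps, because the projection $\mathrm{pr}_1$ is open but not closed; this is exactly the ``main technical step'' you defer. The repair is the observation that $\theta\left(C\times\left\{x\right\}\right)=\rho_x\left(C\right)\times\left\{x\right\}$ lies in the slice $X\times\left\{x\right\}$: if $A\times\left\{x\right\}$ is closed in $X\times X$ and $y\notin A$, a basic open box $U\times V\ni\left(y,x\right)$ missing $A\times\left\{x\right\}$ forces $U\cap A=\emptyset$, so $A$ is closed in $X$. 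With that one line, closedness of $\theta$ does yield closedness of the orbit map $\rho_x$ on closed sets, hence closedness of the induced continuous bijection $\varphi$, hence the homeomorphism. Second, you correctly flag the two hypotheses you lean on: that proper maps are closed (automatic if ``proper'' means universally closed in Bourbaki's sense, but requiring $X\times X$ to be, say, locally compact Hausdorff if ``proper'' means compact preimages of compacts), and that $\left\{x\right\}$ is closed in $X$. Since the theorem only derives Hausdorffness of $X$ from that of $G$, the cleanest reading is to take $X$ at least $T_1$ as a standing hypothesis for the orbit--stabiliser clause; this is harmless in the present paper, where $X$ is always a manifold. Everything else --- Hausdorffness of $X/G$ from closedness of $R=\mathrm{im}\,\theta$ together with openness of the quotient map, closedness of orbits as fibres over a $T_1$ quotient, compactness of the stabiliser via $\theta^{-1}\left(\left\{\left(x,x\right)\right\}\right)=\mathrm{Stab}_G\left(x\right)\times\left\{x\right\}$, and closedness of $\Delta_X=\theta\left(\left\{e\right\}\times X\right)$ when $G$ is Hausdorff --- is correct as stated.
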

\begin{proof}
    \cite{56490}
\end{proof}
\begin{theorem}
    \label{theo:lchcorbstab}
    Let $G$ be a locally compact, Hausdorff topological group with countable basis and $X$ a Hausdorff topological space with a continuous action of $G$ upon $X$. If the orbits are locally compact, then the map $G/\mathrm{Stab}_G\left(x\right)\to\mathrm{Orb}_G\left(x\right)$ is a homeomorphism for any $x\in X$.
\end{theorem}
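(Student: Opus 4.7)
The plan is to prove that the canonical continuous bijection $\phi\colon G/\mathrm{Stab}_G(x)\to\mathrm{Orb}_G(x)$, $g\,\mathrm{Stab}_G(x)\mapsto gx$, is a homeomorphism by showing it is open. Continuity and bijectivity are automatic from the continuity of the action, so the real content is the openness of $\phi$, equivalently the continuity of the inverse map $\mathrm{Orb}_G(x)\to G/\mathrm{Stab}_G(x)$. By the homogeneity of both sides under the $G$-action, it suffices to check openness at the identity coset, i.e., to show that for every open neighborhood $U$ of the identity $e\in G$, the set $Ux$ is a neighborhood of $x$ in $\mathrm{Orb}_G(x)$.

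First I would use local compactness of $G$ to pick a symmetric open neighborhood $V$ of $e$ with $\overline{V}$ compact and $V\cdot V^{-1}\subseteq U$. Next, using the countable basis of $G$, I would cover $G$ by a countable family of left translates $\{g_n V\}_{n\in\mathbb{N}}$; then $\mathrm{Orb}_G(x)=\bigcup_n g_n\overline{V}x$. Each $g_n\overline{V}x$ is the continuous image of the compact set $g_n\overline{V}$, hence compact, and therefore closed in the Hausdorff space $\mathrm{Orb}_G(x)$.

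Now I would invoke the Baire category theorem: since $\mathrm{Orb}_G(x)$ is locally compact Hausdorff by hypothesis, it is a Baire space, so the countable closed cover $\{g_n\overline{V}x\}$ cannot consist entirely of nowhere dense sets. Hence some $g_{n_0}\overline{V}x$, and therefore (by translating by $g_{n_0}^{-1}$, which is a homeomorphism of $\mathrm{Orb}_G(x)$) also $\overline{V}x$, has non-empty interior in $\mathrm{Orb}_G(x)$. Pick an interior point $y=v_0 x$ with $v_0\in\overline{V}$; then $v_0^{-1}\overline{V}x$ is a neighborhood of $x$, and it sits inside $\overline{V}^{-1}\overline{V}x\subseteq V^{-1}V\,x\subseteq Ux$ (taking closures does not escape $U$ after shrinking $V$ slightly if necessary). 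Thus $Ux$ contains a neighborhood of $x$, establishing openness at the base coset.

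The main obstacle is the shrinking step in the last paragraph: going from \emph{some} translate $g_{n_0}\overline{V}x$ having nonempty interior to concluding that $x$ itself is an interior point of $Ux$. The subtlety is that $\overline{V}$ rather than $V$ appears, so one must choose the initial neighborhood $V$ small enough that the closures satisfy $\overline{V}^{-1}\overline{V}\subseteq U$; this uses continuity of multiplication and inversion together with local compactness to select $V$ with $\overline{V}$ compact and contained in a pre-chosen symmetric neighborhood whose square lies in $U$. Once this is in place, transporting the openness from $x$ to any point of $\mathrm{Orb}_G(x)$ by the $G$-action, and correspondingly from the identity coset to any coset in $G/\mathrm{Stab}_G(x)$, completes the proof that $\phi$ is open, hence a homeomorphism.
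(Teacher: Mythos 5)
Your argument is correct: the continuous bijection $G/\mathrm{Stab}_G(x)\to\mathrm{Orb}_G(x)$ is shown to be open by combining second countability (to get a countable cover of the orbit by translates $g_n\overline{V}x$ of a compact set) with the Baire category theorem on the locally compact Hausdorff orbit, and you correctly flag and resolve the only delicate point, namely choosing $V$ so that $\overline{V}^{-1}\overline{V}\subseteq U$. The paper itself does not prove this theorem but only cites a reference, and the argument given there is exactly this standard Baire-category proof, so your proposal matches the intended approach.
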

\begin{proof}
    \cite{Gerrett10}
\end{proof}
\begin{theorem}
    \label{theo:connhomhomeo}
    If $X$ is a homogeneous $G$-space, then any two of its connected components are homeomorphic.
\end{theorem}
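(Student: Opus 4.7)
The plan is to exploit the transitivity of the $G$-action to explicitly produce a homeomorphism between any two prescribed connected components. The key observation is that each element $g \in G$ induces a translation map $\phi_g \colon X \to X$, $x \mapsto g \cdot x$, which is a homeomorphism of $X$: continuity of $\phi_g$ follows from the continuity of the action restricted to $\{g\} \times X$, and $\phi_{g^{-1}}$ provides a continuous two-sided inverse.

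Given two connected components $C_1, C_2 \subseteq X$, I would first pick basepoints $x \in C_1$ and $y \in C_2$. By transitivity of the action there exists $g \in G$ with $\phi_g(x) = y$. Since $\phi_g$ is a homeomorphism, $\phi_g(C_1)$ is a connected subset of $X$ containing $y$, so by maximality of the connected component $C_2$ we have $\phi_g(C_1) \subseteq C_2$. Applying the same reasoning to $\phi_{g^{-1}}$, which carries $y$ to $x$, I obtain $\phi_{g^{-1}}(C_2) \subseteq C_1$, equivalently $C_2 \subseteq \phi_g(C_1)$. Combining the two inclusions yields $\phi_g(C_1) = C_2$, and the restriction $\phi_g|_{C_1} \colon C_1 \to C_2$ is then a continuous bijection with continuous inverse $\phi_{g^{-1}}|_{C_2}$, hence a homeomorphism.

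There is no substantive obstacle in this argument; the entire proof is a short consequence of transitivity plus the fact that homeomorphisms preserve connectedness and send maximal connected subsets to maximal connected subsets. The only point where one must be slightly careful is the implicit assumption that the $G$-action $G \times X \to X$ is jointly continuous, so that each $\phi_g$ is genuinely a self-homeomorphism of $X$ rather than merely a continuous bijection; this is built into the definition of $G$-space used in the preceding sections, so the argument goes through without further hypotheses.
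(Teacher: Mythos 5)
Your proof is correct and complete: translation by a group element carrying a chosen point of one component to a chosen point of the other is a self-homeomorphism of $X$, and the two maximality inclusions (applied to $\phi_g$ and $\phi_{g^{-1}}$) pin down that it restricts to a homeomorphism between the components. The paper supplies no proof of its own for Theorem~\ref{theo:connhomhomeo}, deferring to the cited references, and your argument is exactly the standard one found there, so there is nothing further to reconcile.
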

\begin{theorem}
    \label{theo:idquot}
    For a topological group $G$ and a subgroup $H\leq G$,
    \begin{equation}
        \left(G/H\right)_0\cong G_0/G_0\cap H.
    \end{equation}
\end{theorem}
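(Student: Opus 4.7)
The plan is to construct the natural map $\phi\colon G_0/(G_0\cap H)\to G/H$ induced by composing the inclusion $G_0\hookrightarrow G$ with the quotient $q\colon G\to G/H$, and then verify three things: (i) $\phi$ lands inside $(G/H)_0$ and exhausts it, (ii) $\phi$ is injective, and (iii) $\phi$ is a homeomorphism onto its image. Since $q$ restricted to $G_0$ is constant on left cosets of $G_0\cap H$, the universal property of the quotient $G_0\to G_0/(G_0\cap H)$ immediately gives a continuous $\phi$ with image $q(G_0)$.

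For the image identification, $q(G_0)$ is connected because $G_0$ is and $q$ is continuous, and it contains $eH$, so $q(G_0)\subseteq (G/H)_0$. For the reverse inclusion I would use that $G_0$ is open in $G$ (valid for locally connected topological groups, and in particular the Lie groups relevant to the paper). Normality of $G_0$ in $G$ makes $G_0H$ an open subgroup; its cosets $gG_0H$ partition $G$ into open sets. Pushing forward by the open map $q$ yields a disjoint open cover of $G/H$ by the sets $q(gG_0H)=q(gG_0)$, each connected as the image of the connected set $gG_0$. Thus $q(G_0)$ is a clopen connected subset of $G/H$ containing $eH$, forcing $q(G_0)=(G/H)_0$.

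Injectivity of $\phi$ is immediate: if $g_1H=g_2H$ with $g_1,g_2\in G_0$ then $g_1^{-1}g_2\in H\cap G_0$, so the classes agree in $G_0/(G_0\cap H)$. To upgrade $\phi$ to a homeomorphism, note that because $q$ is open and $G_0$ is open, the restriction $q|_{G_0}\colon G_0\to q(G_0)=(G/H)_0$ is an open surjection, hence a quotient map. Its point-identifications are exactly the cosets of $G_0\cap H$, so $q|_{G_0}$ and the defining quotient $G_0\to G_0/(G_0\cap H)$ induce the same quotient topology, giving $\phi$ as the resulting homeomorphism.

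The main obstacle is the openness of $G_0$, without which the coset argument for clopenness of $q(G_0)$ fails and $q(G_0)$ might be only connected, not equal to the full connected component. In the Lie group setting used throughout the paper this is automatic, so the argument above suffices; for full generality one would instead invoke path-lifting through the principal $H$-bundle $q\colon G\to G/H$ (available whenever $q$ admits local sections, e.g., $H$ closed in a Lie group), but the skeleton of the proof is unchanged.
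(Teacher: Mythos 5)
Your proof is correct and complete in the generality the paper actually needs. The paper offers no argument of its own for this statement---the appendix defers to the cited references---so there is no in-text proof to compare against; your route (the induced continuous bijection $G_0/(G_0\cap H)\to q(G_0)$, the partition of $G/H$ into the disjoint open connected images of the cosets of the open subgroup $G_0H$ to identify $q(G_0)$ with the identity component, and the open-surjection-is-a-quotient-map step to upgrade the bijection to a homeomorphism) is the standard one and each step checks out. You are also right that openness of $G_0$ is load-bearing rather than cosmetic: for a general topological group the identity component of $G/H$ can strictly contain $q(G_0)$ (quotients of totally disconnected groups need not be totally disconnected), so the statement as printed, with no hypothesis beyond ``topological group and subgroup'', is too strong, and your caveat is in effect a correction to it rather than a gap in your argument. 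Every group the paper feeds into this theorem is a Lie group, where $G_0$ is open and the identity component agrees with the identity path component (the reading that the paper's use of $\pi_0$ requires), so your argument covers all of its applications; the path-lifting alternative you sketch for closed subgroups of Lie groups is a legitimate fallback but is not needed here.
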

\begin{theorem}
    \label{theo:connquot}
    For a topological group $G$ and a subgroup $H\leq G$
    \begin{equation}
        \pi_0\left(G/H\right)\simeq\pi_0\left(G\right)/p\left(H\right),
    \end{equation}
    where $p\colon G\to\pi_0\left(G\right)$ is the quotient map.
\end{theorem}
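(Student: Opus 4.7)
The plan is to read the bijection off the tail of the long exact sequence of homotopy ``groups'' attached to the fiber sequence $H\hookrightarrow G\twoheadrightarrow G/H$. Under the regularity in force throughout the paper (with $G$ a Lie group and $H$ a closed subgroup) the quotient map $q\colon G\to G/H$ is a principal $H$-bundle, hence a Serre fibration. The tail of the associated long exact sequence of pointed sets reads
\begin{equation*}
    \cdots\to\pi_1\left(G/H\right)\to\pi_0\left(H\right)\xrightarrow{i_*}\pi_0\left(G\right)\xrightarrow{q_*}\pi_0\left(G/H\right)\to\ast,
\end{equation*}
with $i_*$ and $q_*$ induced by the inclusion and the quotient, respectively.

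Surjectivity of $q_*$ is automatic, as $q$ itself is surjective; exactness at $\pi_0\left(G\right)$ identifies the preimage of the basepoint $\left[H\right]\in\pi_0\left(G/H\right)$ under $q_*$ with $\mathrm{im}\,i_*$. Since $i_*$ sends the path component of $h\in H$ to the component of $h$ viewed in $G$, its image is precisely $p\left(H\right)\leq\pi_0\left(G\right)$. Viewing $\pi_0\left(G/H\right)$ as a transitive $\pi_0\left(G\right)$-set with basepoint $\left[H\right]$ and stabilizer $p\left(H\right)$, an orbit--stabilizer argument then produces the desired bijection $\pi_0\left(G/H\right)\cong\pi_0\left(G\right)/p\left(H\right)$.

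An equivalent, more elementary route is to show directly that the path component $\left(G/H\right)_0$ of the identity coset equals $G_0H/H$. The inclusion $q\left(G_0\right)\subseteq\left(G/H\right)_0$ is immediate from path-connectedness of $G_0$; the reverse inclusion is the single place path lifting is invoked, since a path in $G/H$ from $H$ to $gH$ lifts to a path in $G$ from $e$ to some $g'\in G_0$ with $g'H=gH$, forcing $g\in G_0H$. Combined with $\pi_0\left(G\right)/p\left(H\right)=\left(G/G_0\right)/\left(G_0H/G_0\right)\cong G/G_0H$ (third isomorphism theorem) and the tautological $\pi_0\left(G/H\right)=\left(G/H\right)/\left(G/H\right)_0$, this yields the theorem. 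The only substantive step in either approach is the path-lifting input; everything else is coset bookkeeping. In the Lie-group setting of the paper this is standard and supplied by the references cited in the appendix, so the main obstacle is less a mathematical difficulty than the need to state the correct regularity hypothesis on $H$ explicitly.
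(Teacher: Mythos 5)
The paper supplies no argument for this theorem: it sits in the appendix under the blanket citation to \cite{May99,MP12}, so there is nothing in-text to compare against, and your proposal should be judged on its own. Both of your routes are correct, and you handle the one point people usually fumble in the fibration route: exactness of the pointed long exact sequence at $\pi_0\left(G\right)$ only identifies the fibre of $q_*$ over the basepoint, and you correctly upgrade this to the full bijection by noting that $\pi_0\left(G\right)$ acts transitively on $\pi_0\left(G/H\right)$ with stabiliser $p\left(H\right)$ and invoking orbit--stabiliser. Your elementary route, reducing everything to $\left(G/H\right)_0=G_0H/H$ plus the third isomorphism theorem, is the cleaner one and closer to what the cited sources do. Your closing caveat is substantive rather than cosmetic: as literally stated, for an arbitrary subgroup of an arbitrary topological group the theorem is false. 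For instance, take $G=\mathbb Z^{\mathbb N}$ with the product topology and $H=\bigoplus_{\mathbb N}\mathbb Z$ its dense subgroup of finitely supported sequences; then $\overline{\left\{eH\right\}}=G/H$, so $G/H$ is indiscrete and hence path-connected, giving $\pi_0\left(G/H\right)=\ast$, while $G$ is totally path-disconnected, so $\pi_0\left(G\right)/p\left(H\right)=G/H$ is uncountable. Some lifting hypothesis (e.g.\ $H$ closed in a Lie group $G$, which holds in every application in the paper since $G_v$ is discrete and $G$ is an isometry group) is genuinely needed, and the paper ought to state it.
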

\begin{theorem}
    \label{theo:cartan}
    If $H\subset G$ is a closed subgroup of a (finite dimensional) Lie group, then $H$ is a sub-Lie group and a smooth sub-manifold such that its group operations are smooth functions with respect to the sub-manifold smooth structure.
\end{theorem}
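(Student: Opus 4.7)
The plan is to construct a candidate Lie algebra $\mathfrak{h}\subset\mathfrak{g}$ for $H$, use the exponential map of $G$ to produce a smooth slice chart on $H$ in a neighbourhood of the identity, and then propagate this chart over all of $H$ by left translation. First I would set
\[
\mathfrak{h}=\left\{X\in\mathfrak{g}\;\middle|\;\exp(tX)\in H\text{ for every }t\in\mathbb R\right\},
\]
noting that closure under scalar multiplication is built in. Closure under addition and under the Lie bracket should follow from the Trotter and commutator product formulas $\exp(t(X+Y))=\lim_n(\exp(tX/n)\exp(tY/n))^n$ and the analogous expression for $\exp(t^2[X,Y])$, combined with the hypothesis that $H$ is closed in $G$, which keeps the pointwise limits inside $H$. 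Thus $\mathfrak{h}$ is a Lie subalgebra of $\mathfrak{g}$.

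The crux of the argument is the local statement that $\exp$ already captures all of $H$ near $e$ when restricted to a small neighbourhood of $0$ in $\mathfrak{h}$. I would fix a linear complement $\mathfrak{k}$ so that $\mathfrak{g}=\mathfrak{h}\oplus\mathfrak{k}$ and consider $\Phi\colon\mathfrak{h}\oplus\mathfrak{k}\to G$ given by $\Phi(X,Y)=\exp(X)\exp(Y)$, which is a local diffeomorphism at $0$ by the inverse function theorem because its differential at $0$ is the identity on $\mathfrak{g}$. The hard step, and the main obstacle, is to show that there is a neighbourhood $U\subset\mathfrak{k}$ of $0$ such that $\exp(Y)\in H$ with $Y\in U$ forces $Y=0$. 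Assuming the contrary produces a sequence $Y_n\to 0$ in $\mathfrak{k}\setminus\{0\}$ with $\exp(Y_n)\in H$; passing to a subsequence, $Y_n/\lVert Y_n\rVert\to Z$ on the unit sphere of $\mathfrak{k}$. For any $t\in\mathbb R$ I pick integers $k_n$ with $k_n\lVert Y_n\rVert\to t$; then $\exp(k_nY_n)=\exp(Y_n)^{k_n}\in H$, and closedness of $H$ forces $\exp(tZ)\in H$ in the limit. This places $Z$ in $\mathfrak{h}$ and hence in $\mathfrak{h}\cap\mathfrak{k}=\{0\}$, contradicting $\lVert Z\rVert=1$.

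Combining the two steps, $\Phi$ sends $(\mathfrak{h}\oplus\{0\})\cap W$ bijectively onto $H\cap\Phi(W)$ for a sufficiently small open $W\subset\mathfrak{g}$, producing a smooth slice chart at $e$. Left translations $L_h$ for $h\in H$ are diffeomorphisms of $G$ that preserve $H$, so transporting the slice chart by $L_h$ equips $H$ with an embedded submanifold structure of dimension $\dim\mathfrak{h}$, with respect to which the inclusion $H\hookrightarrow G$ is a smooth embedding. Multiplication and inversion on $H$ are then restrictions of the corresponding smooth maps on $G$ and are therefore smooth in the induced structure, which realises $H$ as the Lie subgroup asserted by the theorem.
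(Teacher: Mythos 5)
Your argument is correct; it is the standard modern proof of Cartan's closed subgroup theorem (the one found in Lee or Hall), whereas the paper itself offers no proof and simply cites the literature, so there is nothing in the source to compare against step by step. All the essential ingredients are present and in the right order: the candidate subalgebra $\mathfrak{h}=\{X:\exp(tX)\in H\ \forall t\}$, closed under addition and bracket via the Lie--Trotter and commutator product formulas together with closedness of $H$; the key rescaling lemma ruling out nonzero $Y\in\mathfrak{k}$ near $0$ with $\exp(Y)\in H$, where the choice of integers $k_n$ with $k_n\lVert Y_n\rVert\to t$ and the limit $\exp(tZ)\in H$ are exactly right; and the propagation of the slice chart by left translation. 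The only step you state without justification is the reverse inclusion $H\cap\Phi(W)\subseteq\Phi\left(\left(\mathfrak{h}\oplus\{0\}\right)\cap W\right)$: you should record the one-line observation that if $\exp(X)\exp(Y)\in H$ with $X\in\mathfrak{h}$ then $\exp(Y)=\exp(X)^{-1}\exp(X)\exp(Y)\in H$, so the key lemma forces $Y=0$. With that sentence added, the proof is complete; smoothness of multiplication and inversion on $H$ does follow as you say, since $H$ is embedded and a smooth map into $G$ with image in an embedded submanifold is smooth into that submanifold.
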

\begin{proof}
    \cite{MSM_1952__42__1_0}
\end{proof}
\begin{theorem}
    \label{theo:covsemi}
    Let $G$ and $H$ be topological groups, $G$ be simply connected, and $H$ act on $G$ through an action $\phi\colon H\to\mathrm{Aut}\left(G\right)$. Let $p\colon\tilde H\to H$ be the universal cover of $H$. Then
    \begin{equation}
        \mathrm{id}_G\times p\colon G\rtimes_{\phi\circ p}\widetilde H\to G\rtimes_\phi H
    \end{equation}
    is also a universal cover.
\end{theorem}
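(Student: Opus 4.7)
The plan is to verify three properties of the map $\mathrm{id}_G\times p$, each essentially formal: that it is a continuous group homomorphism, that it is a topological covering, and that its domain is simply connected. Together these give exactly that $\mathrm{id}_G\times p$ is a universal cover in the category of topological groups, which is what needs to be shown.

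First I would check the homomorphism property by a direct computation from the two semidirect-product multiplications. The twist on $\widetilde H$ was chosen precisely to be $\phi\circ p$, so
\begin{equation*}
    (\mathrm{id}_G\times p)\bigl((g_1,\tilde h_1)(g_2,\tilde h_2)\bigr)
    = \bigl(g_1\,\phi(p(\tilde h_1))(g_2),\,p(\tilde h_1)p(\tilde h_2)\bigr)
    = (\mathrm{id}_G\times p)(g_1,\tilde h_1)\cdot(\mathrm{id}_G\times p)(g_2,\tilde h_2),
\end{equation*}
where the last equality is the multiplication in $G\rtimes_\phi H$. Continuity is automatic: the underlying topological space of a semidirect product is the product, and $\mathrm{id}_G\times p$ is literally the product of two continuous maps.

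For the covering property, since the underlying spaces are $G\times\widetilde H$ and $G\times H$ with their product topologies, it suffices to note that $\mathrm{id}_G\times p\colon G\times\widetilde H\to G\times H$ is a product of a (trivial) cover and a cover. Concretely, any evenly covered neighbourhood $U\subseteq H$ with $p^{-1}(U)=\coprod_i V_i$ yields an evenly covered neighbourhood $G\times U$ whose sheets are the open sets $G\times V_i$, each mapped homeomorphically. Simple-connectedness of the domain then follows from $\pi_1(G\times\widetilde H)\cong\pi_1(G)\times\pi_1(\widetilde H)=0$, using the hypothesis that $G$ is simply connected together with the defining property of the universal cover $\widetilde H$.

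I do not foresee any genuine obstacle: the entire proof is bookkeeping, and the only observation with any content is that the twist $\phi\circ p$ is exactly the choice needed for $\mathrm{id}_G\times p$ to respect the semidirect-product multiplications on the nose. The only mildly subtle point worth stating explicitly is that the underlying topology of $G\rtimes_\psi K$ always coincides with the product topology on $G\times K$ regardless of the twist $\psi$, so that statements about $\pi_1$ and about coverings reduce cleanly to facts about products.
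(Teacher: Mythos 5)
Your proposal is correct and follows essentially the same route as the paper's proof: a direct check that the twist $\phi\circ p$ makes $\mathrm{id}_G\times p$ a homomorphism, followed by the observation that the underlying space of a semidirect product is just the product, so the covering and simple-connectedness claims reduce to standard facts about products. You spell out the evenly covered neighbourhoods and the $\pi_1$ computation slightly more explicitly than the paper does, but there is no substantive difference.
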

\begin{proof}
    First we show that $\mathrm{id}_G\times p$ is a group homomorphism. For any $g_1,g_2\in G$ and $h_1,h_2\in\widetilde H$, we have
    \begin{equation}
        \begin{split}
            \left(g_1,h_1\right)\left(g_2,h_2\right)=\left(g_1\phi_{p\left(h_1\right)}\left(g_2\right),h_1h_2\right)\mapsto^{\mathrm{id}_G\times p}&\left(g_1\phi_{p\left(h_1\right)}\left(g_2\right),p\left(h_1h_2\right)\right)\\
            =&\left(g_1\phi_{p\left(h_1\right)}\left(g_2\right),p\left(h_1\right)p\left(h_2\right)\right)\\
            =&\left(g_1,p\left(h_1\right)\right)\left(g_2,p\left(h_2\right)\right).
        \end{split}
    \end{equation}
    Since topological properties do not depend on the fact that this is a semi-direct product and not an ordinary product (the underlying spaces are the same), we note that the domain is simply connected and the map is a covering map.
\end{proof}

\end{document}